\documentclass[11pt,a4paper]{article}

\usepackage[T1]{fontenc}
\usepackage[utf8]{inputenc}
\usepackage[margin=1in]{geometry}

\usepackage{amsmath,amssymb,amsthm}
\usepackage{mathtools}
\usepackage[linesnumbered,ruled,vlined,noresetcount]{algorithm2e}
\DontPrintSemicolon
\SetArgSty{textnormal}
\SetKwInput{Colors}{Colors}
\SetKwInput{RuleFunction}{Rule Function}
\usepackage{color}
\usepackage{booktabs}
\usepackage{float}
\usepackage{placeins}
\usepackage{enumerate}
\usepackage{tikz}
\usetikzlibrary{arrows.meta,calc}
\usepackage[unicode,pdfencoding=auto,hidelinks]{hyperref}
\usepackage{bookmark}
\theoremstyle{plain}
\newtheorem{theorem}{Theorem}
\newtheorem{jtheorem}[theorem]{Theorem}
\newtheorem{jlemma}[theorem]{Lemma}
\newtheorem{jcorollary}[theorem]{Corollary}

\newcommand{\protocol}[1]{\textsc{#1}}
\newcommand{\colorName}{\mathsf}
\newcommand{\cinit}{\colorName{init}}
\newcommand{\cpath}{\colorName{path}}
\newcommand{\cneigh}{\colorName{neigh}}
\newcommand{\cfin}{\colorName{fin}}
\newcommand{\cheadd}{\colorName{head}}
\newcommand{\cfront}{\colorName{front}}

\newcommand{\clzero}{\colorName{\ell_0}}
\newcommand{\clone}{\colorName{\ell_1}}
\newcommand{\cstay}{\colorName{stay}}
\newcommand{\cstop}{\colorName{stop}}

\newcommand{\mif}{\mathbf{if~}}
\newcommand{\melif}{\mathbf{else~if~}}
\newcommand{\moth}{\mathbf{otherwise}}

\newcommand{\calA}{\mathcal{A}}
\newcommand{\calM}{\mathcal{M}}
\newcommand{\sus}{\protocol{US5}}
\newcommand{\aus}{\calA_{\sus}}
\newcommand{\cus}{C_{\sus}}
\newcommand{\xius}{\xi_{\sus}}
\newcommand{\gktf}{\mathcal{G}_{\mathrm{KTF}}}
\newcommand{\gcb}{\mathcal{G}_{\mathrm{CB}}}
\newcommand{\rlabel}[1]{\textbf{Rule #1}}
\newcommand{\rlabels}[1]{\textbf{Rules #1}}
\newcommand{\itup}[3]{\mathtt{#1#2#3}}
\newcommand{\resulttheorem}[1]{(Thm.~\ref{#1})}
\newcommand{\resultlemma}[1]{(Lem.~\ref{#1})}
\newcommand{\upf}{N(u_k)\setminus
    \left(F\cup\{u_1,u_2,\dots,u_k\}\cup
    \bigcup_{i=1}^{k-1}N(u_i)\right)}

\title{Color Complexity of Recolorable Graph Exploration:\\
Upper and Lower Bounds via Block Structure}
\author{%
  Shoma Hiraoka\thanks{The University of Osaka} \and
  Shunsuke Imori\footnotemark[1] \and
  Shota Takahashi\thanks{Hosei University} \and
  Yuichi Sudo\footnotemark[2]
}
\date{}

\begin{document}
\maketitle

\begin{abstract}
    We study exploration of anonymous, port-free graphs by a single agent with no internal memory.
    To compensate for the lack of memory, the agent uses writable vertex colors as external memory.
    At each step, the agent observes its current color and the multiset of neighbor colors, recolors the current vertex, and requests a destination color.
    If several neighbors have that color, an adversary chooses the destination.
    From every starting vertex, the agent must visit all vertices, return to its start, and terminate there.
    Throughout, recoloring is unrestricted, and the color count includes the common initial color.
    Under this convention, previous work gave a six-color algorithm for general graphs and five-color algorithms for triangle-free graphs and for $\varphi$-free graphs, a class that contains all cacti.
    However, to our knowledge, no nontrivial color lower bound was known for unrestricted recoloring.

    We determine the optimal number of colors on two classes defined by block structure and prove the first nontrivial color lower bounds for unrestricted recoloring.
    First, a single three-color algorithm explores every tree and every simple cycle in $O(n)$ moves, and no algorithm with at most two colors explores $P_3$, the path on three vertices.
    Second, we give a four-color algorithm that explores every graph whose blocks are cycles or complete bipartite graphs in $O(n)$ moves, and we prove that no algorithm with at most three colors explores all subcubic pseudotrees.
    Hence four colors are optimal for every class between subcubic pseudotrees and this block-defined class.
    On cacti, this improves the previous five-color upper bound to a tight four.
    The lower bound reduces the possible initial actions by hand and rules out the remaining cases by a machine-checked SAT certificate on nine graphs with at most five vertices.
    Independent checkers verify the graph encodings, the hand reduction, and every imported trace and domain clause, and a DRAT checker verifies the refutation.
    Finally, we extend the known five-color algorithm for triangle-free graphs to graphs whose blocks are cliques or triangle-free, using $O(n\Delta)$ moves, where $\Delta$ is the maximum degree.
    The extension uses two conditions that let the semi-DFS cleanup restore temporary marks and return to the correct path endpoint against every adversarial choice.
    We prove that this class is exactly the class of graphs on which both conditions hold in every reachable state.
\end{abstract}

\noindent\textbf{2012 ACM Subject Classification:}
Theory of computation $\to$ Distributed algorithms;
Mathematics of computing $\to$ Graph theory

\medskip
\noindent\textbf{Keywords:}
graph exploration, oblivious agent, vertex recoloring, semi-DFS, cactus graphs,
computer-assisted proof

\bigskip

\section{Introduction}
\label{sec:introduction}

Graph exploration asks a single mobile agent to visit every vertex of an initially unknown connected graph and has been studied under many models and objectives~\cite{AKL+79,BGKP22,CFIKP08,DHK19,IKIM23,Koucky02,MPU17,PP99,PDD+96,Reingold08,SSKM22,SBN+15,SOK26,YWB03}.
Formulations differ in whether the agent must only visit every vertex, also return to its start, additionally detect completion and stop, or instead visit every vertex infinitely often.
Related tasks on anonymous networks, such as dispersion, gathering, uniform deployment, and black-hole search, pursue different objectives~\cite{AM18,DFPS06,DFPS07,DSS+08,KA19,KMS19,OKK+13,SMO+16,SNO+19,SSNK20,SSN+24}.
In this paper, we study exploration with return and termination.
From every starting vertex, the agent must visit all vertices, return to its start, and terminate there.

Whether this objective is achievable depends on the resources available to the agent.
Most exploration models rely on port numbers and internal memory, sometimes augmented with vertex identifiers, pebbles, or preprocessed labels~\cite{CFIKP08,DHK19,IKIM23,PP99,Reingold08}.
B\"ockenhauer, Frei, Unger, and Wehner~\cite{bockenhauer2023zero} ask how far exploration is possible without these standard resources, replacing them with writable vertex colors.
At each step, the agent observes the color of its current vertex and those of its neighbors, may recolor the current vertex, and either selects a destination color or terminates.
Unless the agent terminates, the adversary chooses a neighbor of the selected color and moves the agent there.
We call this setting the \emph{BFUW model}.
The central question in the BFUW model is how many colors suffice for exploration.

Throughout, recoloring is unrestricted, and every color count includes the common initial color.\footnote{B\"ockenhauer et al.~\cite{bockenhauer2023zero} do not count the initial color, but Takahashi, Kanaya, Hiraoka, Eguchi, and Sudo~\cite{takahashi2026recolorable} do.
    We follow Takahashi et al. and add one color when stating the bounds of B\"ockenhauer et al.}
When each vertex may be recolored at most once, B\"ockenhauer et al. prove that exactly $n$ colors are required on general $n$-vertex graphs and exactly four on trees.
These bounds concern the once-recolorable variant.
With unrestricted recoloring, they give an eight-color algorithm for general graphs that processes breadth-first layers in order, storing the layer index modulo a constant in the vertex colors.
Takahashi et al.~\cite{takahashi2026recolorable} instead introduce semi-DFS, a DFS variant that adds an unvisited neighbor of the current path endpoint only if that vertex is adjacent to no other vertex on the path.
Their implementation uses six colors on general graphs and five colors on triangle-free graphs.
Their separate five-color algorithm for $\varphi$-free graphs colors vertices according to distance modulo a constant, as does the eight-color algorithm.
A graph is $\varphi$-free if, for every vertex $v$ of degree at least four, each component of $G-v$ contains at most two neighbors of $v$.
This class contains every subcubic graph (i.e., a graph with maximum degree at most three) and every cactus.

\subsection{Our contribution}
We give upper and lower bounds on the number of colors required to explore several graph classes in the BFUW model with unrestricted recoloring.
Each upper bound below is achieved by a single algorithm that explores every graph in the corresponding class.
The color count includes the common initial color.

\emph{Three colors.}
We give a single three-color algorithm that explores every tree and every simple cycle in $O(n)$ moves.
We also prove that no algorithm with at most two colors explores $P_3$, the path on three vertices.
Thus the optimal number of colors for the class consisting of all trees and all simple cycles is three.

\emph{Four colors.}
A block is a maximal biconnected subgraph or a bridge.
We write $\gcb$ for the class of graphs whose blocks are cycles or complete bipartite graphs.
We give a four-color algorithm that explores every graph in this class in $O(n)$ moves.

We also give a computer-assisted proof that at least four colors are required for a single algorithm to explore every subcubic pseudotree.
Here, a graph is \emph{subcubic} if its maximum degree is at most three.
A \emph{pseudotree} is a connected graph with at most one cycle.
This lower bound implies that the optimal number of colors is four for every graph class that contains all subcubic pseudotrees and is contained in $\gcb$ because $\gcb$ contains all subcubic pseudotrees.
In particular, we improve the previous five-color upper bound for cacti to four and prove its optimality.

\emph{Five colors.}
We write $\gktf$ for the class of graphs whose blocks are cliques or triangle-free.
We extend the five-color algorithm of Takahashi et al.~\cite{takahashi2026recolorable} for triangle-free graphs to explore every graph in $\gktf$ with five colors in $O(n\Delta)$ moves.
Here, $\Delta$ denotes the maximum degree.

The class $\gktf$ and the class of $\varphi$-free graphs covered by a separate known five-color algorithm are incomparable.
A graph is \emph{unichord-free} if no cycle has exactly one chord~\cite{trotignon2010structure}.
Every unichord-free graph belongs to $\gktf$, so we also obtain a five-color upper bound for this class.

Our lower bounds are the first nontrivial color lower bounds for unrestricted recoloring.
Together with the known six-color algorithm for general graphs, they place the optimal number of colors for general graphs between four and six.
It remains open to prove our four-color lower bound without computer assistance.
Figure~\ref{fig:class-inclusions} summarizes the class inclusions, and Table~\ref{tab:oblivious-results} summarizes the upper and lower color bounds.

\begin{figure}[H]
    \centering
    \begin{tikzpicture}[
        x=1cm,y=1cm,font=\scriptsize,
        cls/.style={rectangle,draw,align=center,inner xsep=4pt,inner ysep=3pt},
        inc/.style={-{Latex[length=1.6mm]},semithick,shorten <=1pt,shorten >=1pt}
        ]
        \node[cls] (sub) at (0,0) {subcubic\\pseudotrees};
        \node[cls] (pseudo) at (2.7,0) {pseudotrees};
        \node[cls] (cactus) at (4.9,0) {cacti};
        \node[cls] (cb) at (7.1,0) {$\gcb$};
        \node[cls] (ktf) at (9.2,0) {$\gktf$};
        \node[cls] (deg2) at (-2.25,0) {$\Delta\leq2$};
        \node[cls] (trees) at (2.7,0.78) {trees};
        \node[cls] (deg3) at (2.7,-0.78) {$\Delta\leq3$};
        \node[cls] (phi) at (4.9,-0.78) {$\varphi$-free};
        \node[cls] (tfree) at (9.2,-0.78) {triangle-free};
        \draw[inc] (deg2.east)--(sub.west);
        \draw[inc] (trees.south)--(pseudo.north);
        \draw[inc] (sub.south)--(deg3.north west);
        \draw[inc] (deg3.east)--(phi.west);
        \draw[inc] (sub.east)--(pseudo.west);
        \draw[inc] (pseudo.east)--(cactus.west);
        \draw[inc] (cactus.east)--(cb.west);
        \draw[inc] (cb.east)--(ktf.west);
        \draw[inc] (cactus.south)--(phi.north);
        \draw[inc] (tfree.north)--(ktf.south);
        \draw[densely dashed,shorten <=1pt,shorten >=1pt]
        (phi.north east)--node[below,sloped]{incomparable} (ktf.south west);
    \end{tikzpicture}
    \caption{Relations among the graph classes used in this paper. Solid arrows denote proper inclusions, the dashed line denotes incomparability, and $\Delta$ denotes maximum degree.}
    \label{fig:class-inclusions}
\end{figure}

\begin{table}[htbp]
    \caption{Color bounds including the initial color, for a single algorithm per class. Lower bounds of four follow from Theorem~\ref{thm:subcubic-pseudotree-lower} by class containment.}
    \label{tab:oblivious-results}
    \centering
    \begin{tabular}{@{}lcc@{}}
        \toprule
        Graph class      & Lower bound                                       & Upper bound \\
        \midrule
        Trees and simple cycles
                         & $3$~\resultlemma{lem:p3-two-color}
                         & $3$~\resulttheorem{thm:tc3}                                     \\
        Cacti            & $4$~\resulttheorem{thm:subcubic-pseudotree-lower}
                         & $4$~\resulttheorem{thm:cb4}                                     \\
        $\gcb$           & $4$~\resulttheorem{thm:subcubic-pseudotree-lower}
                         & $4$~\resulttheorem{thm:cb4}                                     \\
        $\gktf$          & $4$~\resulttheorem{thm:subcubic-pseudotree-lower}
                         & $5$~\resulttheorem{thm:main}                                    \\
        $\varphi$-free   & $4$~\resulttheorem{thm:subcubic-pseudotree-lower}
                         & $5$~\cite{takahashi2026recolorable}                             \\
        Arbitrary graphs & $4$~\resulttheorem{thm:subcubic-pseudotree-lower}
                         & $6$~\cite{takahashi2026recolorable}                             \\
        \bottomrule
    \end{tabular}
\end{table}
\FloatBarrier

\subsection{Further Related Work}
Other exploration models provide different resources for navigation.

Aleliunas, Karp, Lipton, Lov\'asz, and Rackoff~\cite{AKL+79} show that a random walk covers every finite connected graph with probability one and introduce universal traversal sequences for port-numbered graphs.
Kouck\'y~\cite{Koucky02} introduced exploration sequences with backtracking, and Reingold~\cite{Reingold08} constructed universal exploration sequences in logarithmic space.
Neither approach gives deterministic exploration with return and termination without port numbers.

Deterministic exploration becomes possible with other forms of local information.
Cohen, Fraigniaud, Ilcinkas, Korman, and Peleg~\cite{CFIKP08} show that three-valued preprocessing labels let a finite automaton explore every graph.
Disser, Hackfeld, and Klimm~\cite{DHK19} prove that $\Theta(\log\log n)$ distinguishable pebbles, or the same number of constant-memory agents, are both necessary and sufficient to explore arbitrary port-numbered graphs.
Bojko, Gotfryd, Kowalski, and Paj\k{a}k~\cite{BGKP22} study trees with hidden incoming edges and bound the trade-offs among agent memory, per-vertex bits, and time.

Several works instead place persistent state directly at vertices.
Sudo, Baba, Nakamura, Ooshita, Kakugawa, and Masuzawa~\cite{SBN+15} study exploration with vertex whiteboards.
Priezzhev, Dhar, Dhar, and Krishnamurthy~\cite{PDD+96} introduced Eulerian walkers, which maintain a cyclic pointer at each vertex, Yanovski, Wagner, and Bruckstein~\cite{YWB03} apply them to perpetual patrolling, and Menc, Paj\k{a}k, and Uzna\'nski~\cite{MPU17} bound the time and space of rotor-router exploration.
Inoue, Kitamura, Izumi, and Masuzawa~\cite{IKIM23} instead study an agent with no persistent internal state (an oblivious agent) that can update persistent states at vertices.
Sudo, Ooshita, and Kamei~\cite{SOK26} study self-stabilizing exploration from arbitrary agent and vertex states, where the objective is to visit every vertex infinitely often.
For cacti with distinguishable cycles, Shimoyama, Sudo, Kakugawa, and Masuzawa~\cite{SSKM22} give an algorithm that visits every vertex infinitely often.

\section{Model and Preliminaries}
\label{sec:preliminaries}

Throughout, $G=(V,E)$ is a finite simple connected undirected graph with $n=|V|$ and maximum degree $\Delta$.
We write $N(v)=\{u\in V:\{u,v\}\in E\}$ for the neighborhood of a vertex $v$.
A single agent moves on $G$, whose vertices have colors from a finite set $C$.
Initially, the agent is at an arbitrary start $s$, and every vertex has the initial color $c_0\in C$.
The agent retains no internal state between actions and is therefore called \emph{oblivious}.

At each step, the agent observes only the color $c$ of its current vertex and the multiset $M$ of neighbor colors, including their multiplicities.
We write this observation as $(c;M)$.
The agent chooses a new current color and an action from this observation.
The action is a move to a neighbor of a specified color, $\cstay$ to remain at the current vertex, or $\cstop$ to terminate.
A deterministic \emph{rule function} $\xi:C\times\calM(C)\to C\times(C\cup\{\cstay,\cstop\})$ specifies these choices, where $\calM(C)$ denotes the finite multisets over $C$.
If $\xi(c,M)=(c',d)$, the agent recolors its current vertex with $c'$ and then performs action $d$.
This is one rule application.

If $d\in C$, an adversary chooses a neighbor of color $d$ as the destination.
At each step, the adversary sees all vertex colors and the agent position.
If no neighbor of color $d$ exists, this \emph{absent-color request} ends the process.
Since the graph has no self-loops, recoloring the current vertex does not change the neighborhood-color multiset $M$.
Recoloring is unrestricted, so a vertex may be recolored any number of times and may return from a noninitial color to $c_0$.

An exploration algorithm is specified by $\calA=(C,c_0,\xi)$.
The rule function is not given $n$, $\Delta$, a round counter, a distinguished start marker, vertex identifiers, the previous vertex, or incoming ports.

A \emph{configuration} $q=(v,\psi)$ records the agent position $v$ and vertex coloring $\psi:V\to C$ immediately before a rule application.
The observation in this configuration is $(\psi(v);M(v))$, where $M(v)=\{\!\{\psi(u):u\in N(v)\}\!\}$.
After a move or $\cstay$, the next configuration records the resulting coloring and position.
An \emph{execution} is a finite or infinite sequence of configurations obtained in this way from the initial configuration at $s$.
No configuration is recorded after a stop or an absent-color request.
An execution is \emph{maximal} if it is infinite or ends by issuing $\cstop$ or requesting an absent color.

Under the exploration objective of B\"ockenhauer et al.~\cite{bockenhauer2023zero}, a maximal execution from $s$ \emph{succeeds} if it visits every vertex and issues $\cstop$ at $s$.
A maximal execution \emph{fails} if it runs forever, requests an absent color, or issues $\cstop$ before visiting every vertex or at a vertex other than $s$.
An algorithm explores $G$ if, for every start $s$ and every sequence of adversarial choices, the resulting maximal execution succeeds.
For a fixed algorithm and graph, a configuration or observation is \emph{reachable} if it occurs in an execution from some starting vertex.

The number of colors is $|C|$, including the initial color $c_0$.
The \emph{color complexity} of a graph class is the minimum $|C|$ for which a single algorithm with color set $C$ explores every graph in the class.
An algorithm with fewer than $k$ colors can be extended to a color set of size $k$ by leaving the additional colors unused.
Running-time bounds count rule applications, including $\cstay$ actions.

For a path $P=(u_1,\dots,u_k)$, we write $V(P)=\{u_1,\dots,u_k\}$ for its vertex set.
A path is induced if no two nonconsecutive vertices on the path are adjacent.
A vertex $v$ is a \emph{cut vertex} of a graph $H$ if $H-v$ has more connected components than $H$.
A \emph{block} of $G$ is a maximal connected subgraph with at least two vertices and no cut vertex.
Equivalently, a block is a maximal biconnected subgraph or a bridge on its two endpoints.
Every edge belongs to exactly one block, every cycle lies in one block, and the cut vertices of $G$ are exactly the vertices in at least two blocks.

For pairwise distinct colors $c_1,\dots,c_m$, we write $c_1,\dots,c_m\in M$ if each $c_i$ occurs in $M$ and $c_1,\dots,c_m\notin M$ if none occurs.

\begin{jlemma}
    \label{lem:stay-elimination}
    A graph family is explorable with a given color set if and only if it is explorable by a rule that never uses $\cstay$ at a reachable observation.
\end{jlemma}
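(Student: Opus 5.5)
The "if" direction is immediate: a rule that never issues $\cstay$ at a reachable observation is already an exploration algorithm with the same color set. For the converse we start from an algorithm $\calA=(C,c_0,\xi)$ that explores the family. We then build $\calA'=(C,c_0,\xi')$ by collapsing each run of consecutive $\cstay$ actions into a single rule application.

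The key observation is that a $\cstay$ leaves the agent's position unchanged. It also leaves the neighbor multiset $M$ unchanged, since there are no self-loops and only the current vertex is recolored. A maximal run of stays from an observation $(c;M)$ is therefore a deterministic orbit of the map $c\mapsto c'$ where $\xi(c,M)=(c',\cstay)$, with $M$ held fixed. This orbit depends only on $(c;M)$, not on the graph or on the adversary. Define the chain $c=c^{(0)},c^{(1)},\dots$ with $\xi(c^{(i)},M)=(c^{(i+1)},\cstay)$ for as long as the action is $\cstay$. Since $C$ is finite, exactly one of two things happens.

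\begin{enumerate}[(i)]
\item The chain reaches some $c^{(j)}$ with $\xi(c^{(j)},M)=(c^*,d)$ and $d\ne\cstay$. In this case set $\xi'(c,M)=(c^*,d)$.
\item The chain enters a cycle of stays. Then $\calA$ runs forever from any configuration producing $(c;M)$, which is a failure. Such an observation is therefore not reachable under $\calA$, and we define $\xi'(c,M)$ arbitrarily with a non-$\cstay$ action, for example $(c,\cstop)$.
\end{enumerate}

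Next we argue that $\calA'$ simulates $\calA$. We show by induction along executions that every execution of $\calA'$ from $s$ under some adversary is obtained from an execution of $\calA$ from $s$, under the same adversary choices, by deleting the configurations that precede a $\cstay$. In particular, the two executions agree on the vertices visited, the resulting colorings at move times, and where and whether $\cstop$ or an absent-color request occurs.

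The induction step is as follows. Suppose $\calA'$ is at configuration $(v,\psi)$, and the corresponding $\calA$ execution is at the same configuration. That configuration is reachable under $\calA$, so case (ii) cannot apply to its observation. Hence $\calA$ performs $j$ stays and then the action $d$ with final color $c^*$, which is exactly what $\calA'$ does in one step. The adversary's options for a move request are then identical, because the coloring of $V\setminus\{v\}$ is unchanged and $v$ carries $c^*$ in both executions. Consequently every maximal $\calA'$ execution succeeds whenever all $\calA$ executions do. Moreover, every observation reachable under $\calA'$ is also reachable under $\calA$ and falls in case (i), so $\xi'$ never issues $\cstay$ at a reachable observation.

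The main obstacle is the cycling case together with the reachability bookkeeping. We must make sure that the arbitrary values assigned in case (ii) never matter. This is handled by proving, within the same induction, that reachable $\calA'$ configurations are reachable $\calA$ configurations. The same correspondence also shows that the move count does not increase, which preserves the running-time bounds that count $\cstay$ actions.
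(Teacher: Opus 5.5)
Your proof is correct and follows the paper's argument: you collapse each run of $\cstay$ actions, which is determined by the observation $(c;M)$ alone, into its final recoloring and move or stop, so that the new rule's executions are the old ones with the stay steps removed. Your explicit treatment of cycling stay-chains and of graph-independence spells out what the paper leaves implicit. One minor wording slip: the configurations to delete are those recorded \emph{after} a $\cstay$, not those preceding one; your induction step already uses the correct correspondence.
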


\begin{proof}
    During consecutive $\cstay$ actions, the agent position and neighborhood multiset $M$ remain fixed.
    The starting observation $(c;M)$ determines the sequence, which must end with a move or stop because the rule succeeds.
    Replace each reachable such sequence by its final recoloring and move or stop, retaining every other output.
    The resulting executions are exactly the original executions with their $\cstay$ steps deleted.
    The converse holds because a rule that never stays is allowed by the original model.
\end{proof}

\begin{jlemma}
    \label{lem:execution-merging}
    If two executions of the same rule on the same graph, started at distinct vertices, reach the same configuration before their next rule application, then some starting vertex and sequence of adversarial choices yield a failing maximal execution of the rule.
\end{jlemma}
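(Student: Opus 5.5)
Suppose executions $E_s$ from $s$ and $E_t$ from $t\neq s$, under some adversarial choices, both reach the same configuration $q=(v,\psi)$ immediately before a rule application. The rule function is deterministic and sees only the current configuration, and the adversary's later choices may depend only on the current configuration. Hence from $q$ onward both executions can be continued by \emph{the same} suffix. I will fix the adversary's choices after $q$ once and for all, identically in both, and call the resulting continuation $S$. This gives two maximal executions $E_s\cdot S$ and $E_t\cdot S$ that share the suffix $S$ exactly. The argument then shows that at least one of them fails.

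The main step is a case split on how $S$ ends.
\begin{itemize}
\item If $S$ is infinite, or ends with an absent-color request, then $E_s\cdot S$ is a failing maximal execution, and we are done.
\item Otherwise $S$ ends by issuing $\cstop$ at some vertex $w$. This step is identical in both executions, because the final configuration and the final rule application are common to both. Success from $s$ requires $w=s$, and success from $t$ requires $w=t$. Since $s\neq t$, at least one of the two maximal executions stops at a vertex other than its own start, and so it fails.
\end{itemize}
In every case, some starting vertex together with some sequence of adversarial choices yields a failing maximal execution, which is what the lemma claims.

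The only point that needs care is whether the adversary may legitimately play the same suffix in both executions. If adversarial choices were allowed to depend on history, then "the same suffix" would still be a valid choice, since the adversary is arbitrary. So the claim holds under either reading of the adversary. It also matters that no configuration records a starting marker. This is guaranteed by the model, because the rule function is not given a distinguished start marker, so the configuration $(v,\psi)$ carries no information about where the execution began. Cases where $\cstay$ occurs inside $S$ need no special treatment. A stay is simply another rule application that leads to a configuration, so it fits the same argument (or one can invoke Lemma~\ref{lem:stay-elimination}).

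I expect no real obstacle here. The subtle step is being precise that "maximal execution" includes infinite runs and absent-color requests as failures, which removes the need for any separate termination argument.
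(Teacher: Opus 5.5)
Your proof is correct and follows essentially the same route as the paper's: fix identical adversarial choices after the shared configuration, so by determinism both executions continue identically. An infinite run or an absent-color request then makes both fail, and a common stopping vertex cannot equal both distinct starts.
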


\begin{proof}
    After the executions reach the same configuration, the adversary can choose the same destinations, so determinism keeps all later configurations identical.
    An infinite continuation or an absent-color request makes both executions fail.
    Otherwise they stop at the same vertex, which cannot be both distinct starts.
\end{proof}

\section{Three Colors}
\label{sec:tight-three-color}

In this section, we present the three-color algorithm $\mathcal A_{\mathrm{TC3}}$ (Algorithm~\ref{alg:tree-cycle-three}) for trees and simple cycles and show that two colors do not suffice for $P_3$, the path on three vertices.

\subsection{Trees and simple cycles}

\begin{algorithm}[!t]
    \caption{Three-Color Exploration of Trees and Simple Cycles $\mathcal A_{\mathrm{TC3}}$}
    \label{alg:tree-cycle-three}
    \Colors{
        $C_{\mathrm{TC3}}=\{\cinit,\clzero,\clone\}$ and $c_0=\cinit$
    }
    \RuleFunction{\\
        \begingroup
        \setlength{\arraycolsep}{3pt}
        \begin{align*}
            \xi_{\mathrm{TC3}}(\cinit,M)  & =
            \begin{cases}
                (\clone,\cstop)
                 & \mif \cinit,\clzero,\clone\notin M
                \hfill\ (\rlabel{D1})                            \\
                (\clzero,\cinit)
                 & \melif \cinit\in M\land\clzero,\clone\notin M
                \hfill\ (\rlabel{D2})                            \\
                (\clone,\cinit)
                 & \melif \cinit,\clzero\in M
                \hfill\ (\rlabel{D3})                            \\
                (\cinit,\clone)
                 & \melif \cinit,\clone\in M
                \hfill\ (\rlabel{D4})                            \\
                (\clone,\clone)
                 & \melif \clone\in M
                \hfill\ (\rlabel{D5})                            \\
                (\clone,\clzero)
                 & \melif \clzero\in M
                \hfill\ (\rlabel{D6})
            \end{cases}                                       \\
            \xi_{\mathrm{TC3}}(\clzero,M) & =
            \begin{cases}
                (\clzero,\cinit) & \mif \cinit,\clone\in M \hfill\ (\rlabel{D7})                       \\
                (\clone,\clzero) & \melif \clzero,\clone\in M \hfill\ (\rlabel{D8})                    \\
                (\clone,\cstop)  & \melif \clone\in M\land\cinit,\clzero\notin M \hfill\ (\rlabel{D9})
            \end{cases} \\
            \xi_{\mathrm{TC3}}(\clone,M)  & =
            \begin{cases}
                (\clzero,\cinit) & \mif \cinit,\clzero\in M \hfill\ (\rlabel{D10})   \\
                (\clone,\clzero) & \melif \clzero,\clone\in M \hfill\ (\rlabel{D11})
            \end{cases}
        \end{align*}
        \endgroup}
\end{algorithm}

On a tree rooted at the start, an entry into an all-$\cinit$ child subtree from a $\clzero$ parent returns with the subtree colored $\clone$.
On a simple cycle, the first move fixes a direction.
The agent colors the forward route $\clzero$, reverses direction, and colors the return route $\clone$ before stopping at the start.
Outputs omitted from a rule table may be assigned arbitrarily; each correctness proof shows that the corresponding observations are unreachable.

\begin{jtheorem}
    \label{thm:tc3}
    Algorithm~\ref{alg:tree-cycle-three} explores every tree and every simple cycle using three colors and $O(n)$ moves.
\end{jtheorem}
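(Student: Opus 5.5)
The plan is to treat trees and cycles separately. For each class, fix an arbitrary start $s$ and prove by induction on the number of rule applications an invariant describing every reachable configuration. I will then read off from the invariant that only the listed rules ever fire, that every adversarial choice is forced up to harmless symmetry, and that the run ends with $\cstop$ at $s$ after every vertex has been visited. The trivial case $n=1$ is handled by \rlabel{D1}. For $n\ge2$, the first step is always \rlabel{D2}. Afterwards no vertex other than $s$ ever observes an all-$\cinit$ neighborhood, because the vertex it was entered from is already colored.

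\emph{Trees.} Root the tree at $s$. The invariant is as follows. There is a root-to-current path $s=u_1,\dots,u_k$ whose vertices are colored $\clzero$, except that the last path vertex may be temporarily $\clone$ (``pending''). Every subtree hanging off the path that has been fully explored is entirely $\clone$, and every other vertex is $\cinit$. The key local patterns I will verify are these.

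A $\cinit$ vertex $v$ entered from a $\clzero$ parent applies \rlabel{D3} if it has $\cinit$ children and \rlabel{D6} if it is a leaf. Under \rlabel{D3}, $v$ becomes pending $\clone$ and probes a child $w$. Then $w$ either bounces straight back by \rlabel{D4} (it has $\cinit$ children), or it is a leaf, finishes itself by \rlabel{D5}, and returns. The pending $v$ then uses \rlabel{D10} to become $\clzero$ and descend into an unexplored child, or, if no $\cinit$ neighbor remains, uses \rlabel{D11} to stay $\clone$ and return to its unique $\clzero$ neighbor, its parent. A $\clzero$ path vertex uses \rlabel{D7} to descend while $\cinit$ children remain. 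Otherwise it uses \rlabel{D8} to finish and climb, since its parent is its unique $\clzero$ neighbor. At the root, which has no $\clzero$ neighbor, it uses \rlabel{D9} to stop.

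The invariant guarantees that whenever a color is requested, every neighbor with that color is an equally valid choice. Either the request is for the unique parent, or it is for some $\cinit$ child, and all unexplored children are symmetric for the invariant. Each edge is traversed $O(1)$ times, at most one probe round trip plus one descent and one ascent, so the run takes $O(n)$ moves.

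\emph{Cycles.} Let $s=v_0,v_1,\dots,v_{n-1}$ be the cycle, oriented by the adversary's first choice $v_1$. The forward invariant is that $v_0,\dots,v_{i-1}$ are $\clzero$, the rest are $\cinit$, and the agent is at $v_i$. This phase uses the same probe handshake, \rlabel{D3}, \rlabel{D4} and \rlabel{D10}: $v_i$ turns $\clone$, steps to $v_{i+1}$, is bounced back, turns $\clzero$, and advances.

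The crucial step is the arrival at the end of the forward route. When $v_{n-2}$ is pending $\clone$ and probes $v_{n-1}$, the vertex $v_{n-1}$ sees $\{\clone,\clzero\}$ and no $\cinit$, so \rlabel{D5} fires. It forces the move back to $v_{n-2}$, the only $\clone$ neighbor, rather than to $s$. A naive rule here would let the adversary send the agent to $s$ and end the run at the wrong vertex. After that, \rlabel{D11} and then repeated \rlabel{D8} retrace $v_{n-3},\dots,v_1$, recoloring them $\clone$; at each step the unique $\clzero$ neighbor is the next vertex back toward $s$. Finally $s$ sees only $\clone$ and stops by \rlabel{D9}. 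I will check the small case $n=3$ directly. The run makes $O(n)$ moves.

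I expect the main obstacle to be proving that every observation arising in either class selects the intended rule: in particular, that the pending-$\clone$ handshake never confuses a parent with a finished child or with the cycle's start. I will address this by checking, for each case of the invariant, the exact multiset $M$ at the agent and confirming the first matching rule in the table.
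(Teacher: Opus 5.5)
Your proposal is correct and follows essentially the same route as the paper. On trees you use the same handshake: \rlabel{D3} probes, \rlabel{D4} or \rlabel{D5} bounces back, \rlabel{D10} or \rlabel{D7} descends, \rlabel{D11} or \rlabel{D8} climbs, and \rlabel{D9} stops. On cycles you use the same forward invariant and the same \rlabels{D3--D5--D11} reversal at $v_{n-2}$. The only difference is organizational: the paper proves the tree case by induction on subtree height (entering an all-$\cinit$ subtree from a $\clzero$ parent returns with it entirely $\clone$), while you maintain a global invariant over rule applications, which amounts to the same case analysis.
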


\begin{proof}
    \emph{Case 1 (tree).}
    We root the tree at the start $s$.
    If $s$ is isolated, \rlabel{D1} stops immediately.
    Otherwise \rlabel{D2} recolors $s$ to $\clzero$ and moves to an adversarially chosen neighbor.

    For each nonroot vertex $v$ with parent $p$, let $T_v$ denote the subtree rooted at $v$.
    We prove the following claim by induction on its height.
    The induction claim assumes that the agent is at $v$, the parent $p$ is in $\clzero$, and all of $T_v$ is in $\cinit$.
    From this configuration, the agent visits every vertex of $T_v$, colors them all with $\clone$, and returns to $p$, whose color remains $\clzero$.
    Except for the final move to $p$, the agent stays inside $T_v$.
    If $v$ is a leaf, \rlabel{D6} does exactly this.

    We next assume that $v$ is not a leaf.
    \rlabel{D3} recolors $v$ to $\clone$ and enters an adversarially chosen $\cinit$-colored child $x$.
    If $x$ is a leaf, the agent sees only the $\clone$-colored vertex $v$ there, so \rlabel{D5} recolors $x$ with $\clone$ and returns to $v$.
    Otherwise the agent sees both $\cinit$ and $\clone$, so \rlabel{D4} returns immediately to $v$ and leaves every vertex of $T_x$ colored $\cinit$.
    Hence the agent is back at $v$, and every child subtree is either entirely $\cinit$ or entirely $\clone$.

    If a child subtree still has all vertices colored $\cinit$, \rlabel{D10} first changes $v$ to $\clzero$ and enters one such subtree.
    On later iterations, \rlabel{D7} keeps $v$ colored $\clzero$ and enters the next such subtree.
    By the induction hypothesis, the agent visits every vertex of that subtree, colors them all with $\clone$, and returns to $v$.
    This decreases the number of child subtrees whose vertices all have color $\cinit$ by one.
    When none remains, \rlabel{D11} if $v$ is $\clone$, or \rlabel{D8} if $v$ is $\clzero$, recolors $v$ with $\clone$ and returns to $p$.
    Until the final return every move stays at $v$ or inside a child subtree, so $p$ remains $\clzero$.

    After \rlabel{D2}, the claim applies because $s$ is $\clzero$ and the entered child subtree is entirely $\cinit$.
    After each return, \rlabel{D7} enters another $\cinit$-colored child if one remains.
    Otherwise \rlabel{D9} recolors $s$ with $\clone$ and stops there.
    Entering a child subtree and returning to its parent traverses the edge between its root and parent once in each direction.
    The test by \rlabels{D3--D4} traverses the same edge at most once more in each direction.
    Hence the execution uses $O(n)$ moves.

    \emph{Case 2 (simple cycle).}
    Let $v_1$ be the first selected destination, and list the cycle vertices in order as $v_0=s,v_1,\dots,v_{n-1},v_0$.
    Immediately after \rlabel{D2} and after each complete sequence \rlabels{D3--D4--D10}, the agent is at $v_i$, the vertices $v_0,\dots,v_{i-1}$ are $\clzero$, and $v_i$ and every vertex in the rest of the cycle have color $\cinit$.
    For $1\le i\le n-3$, the sequence \rlabels{D3--D4--D10} moves the agent to $v_{i+1}$ and restores the same condition with $i+1$ in place of $i$.
    For $i=n-2$, the sequence \rlabels{D3--D5--D11} visits the last vertex $v_{n-1}$ and returns to $v_{n-3}$, reversing direction.
    Each application of \rlabel{D8} changes one $\clzero$ vertex to $\clone$ and moves one step closer to $s$.
    \rlabel{D9} stops at $s$.
    After \rlabel{D2}, every requested destination is unique.
    There are $O(1)$ rule applications per vertex, giving $O(n)$ moves in total.

    A displayed rule applies at every stage described above, so every omitted observation is unreachable.
\end{proof}

\subsection{The two-color lower bound and tightness}

\begin{jlemma}
    \label{lem:p3-two-color}
    The path $P_3$ cannot be explored with at most two colors.
\end{jlemma}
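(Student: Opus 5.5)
The plan is a finite case analysis on the rule function, using Lemma~\ref{lem:stay-elimination} to assume no $\cstay$ and Lemma~\ref{lem:execution-merging} to eliminate branches. Adding unused colors does not help, so we may assume $C=\{0,1\}$ with $c_0=0$. Write $P_3=a-b-c$; the reflection $a\leftrightarrow c$ is an automorphism, so the executions from $a$ and from $c$ are mirror images. Every configuration is then a position together with one of $8$ colorings, so there are only $24$ configurations, and the whole argument is a finite search.

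First I pin down the opening moves. From the start $b$ the observation is $(0;\{\!\{0,0\}\!\})$. Stopping fails, since $a$ and $c$ are unvisited, and color $1$ is absent, so the rule must be $(y,0)$. Say the adversary sends the agent to $a$. If $y=0$, the resulting configuration is the initial configuration at $a$, and Lemma~\ref{lem:execution-merging} gives a failure; hence $y=1$. Symmetrically, from the start $a$ the observation $(0;\{\!\{0\}\!\})$ forces $(x,0)$ into $b$. If $x=0$, the agent is at $b$ with everything colored $0$, which is the initial configuration at $b$; hence $x=1$. So after one step the start-$a$ run is at $b$ with coloring $(a,b,c)=(1,0,0)$, observing $(0;\{\!\{0,1\}\!\})$. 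The start-$b$ run is at $a$ with coloring $(0,1,0)$, observing $(0;\{\!\{1\}\!\})$, and its mirror image is at $c$ with coloring $(0,1,0)$.

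Next I continue the branching. At each step I determine the output of the rule at the newly reached observation by ruling out every alternative. A $\cstop$ is excluded whenever some vertex is still unvisited or the agent is not at its start. Absent-color requests are excluded. Any choice that recreates a configuration already occurring in another execution from a different start is excluded by Lemma~\ref{lem:execution-merging}. Any choice that revisits a configuration of the same execution is excluded because the adversary can then repeat its choices and loop forever. The key leverage is that endpoint observations are very poor: an endpoint sees only $(\cdot;\{\!\{0\}\!\})$ or $(\cdot;\{\!\{1\}\!\})$, so only four endpoint observations exist. Moreover, the same endpoint observation must serve both as an intermediate step of one run, for example reaching $c$ in the start-$a$ run, and as a step of another run, for example the start-$b$ run at $a$, which must go back through $b$ and still reach $c$. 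For example, from $(0;\{\!\{0,1\}\!\})$ at $b$ in the start-$a$ run, the output $(1,0)$ sends the agent to $c$ observing $(0;\{\!\{1\}\!\})$. This is the same observation as in the start-$b$ run, so the two runs apply the same rule at this point, and I then track both runs jointly until they either coincide up to reflection, and hence merge, or one of them is forced to stop at a wrong vertex.

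The main obstacle is keeping this tree of cases small. I would organize it by the output at the center observation $(0;\{\!\{0,1\}\!\})$ and at the two endpoint observations $(0;\{\!\{1\}\!\})$ and $(1;\{\!\{0\}\!\})$. In every branch I would exhibit either an explicit merge of two executions with different starts, modulo the reflection, or a forced repetition of a configuration. If a clean hand argument for some branch resists, the fallback is to note that there are only finitely many rule functions restricted to the reachable observations on $P_3$ with two colors. Each can be checked by exploring the $24$-configuration game graph against the adversary, and that mechanical enumeration proves the lemma.
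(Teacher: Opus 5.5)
Your opening step is correct and matches the paper. Both all-zero observations are forced to recolor to $1$ and request $0$. The exclusion tools you list (a premature or misplaced stop, an absent color, Lemma~\ref{lem:execution-merging}, and a repeated configuration within one execution) are exactly the ones the paper uses. The gap is that the proof ends there. The whole content of the lemma is that every branch of the search closes. You only assert this (``in every branch I would exhibit\dots'', ``that mechanical enumeration proves the lemma'') and never carry it out, and an enumeration that has not been performed establishes nothing.

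The missing analysis is short if you branch on just two outputs. Write $(v;x,y,z)$ for the agent at $v$ with colors $x,y,z$ on $a,b,c$, and $x\to y$ for ``recolor to $x$, request $y$''.
\begin{itemize}
\item First fix $\alpha=\xi(0;\{\!\{1\}\!\})$, which the start-$b$ run uses at $(a;0,1,0)$. It cannot stop, so $\alpha\in\{0\to1,1\to1\}$. If $\alpha=0\to1$, the run reaches $(b;0,1,0)$ with observation $(1;\{\!\{0,0\}\!\})$. There $0\to0$ recreates an initial configuration at an endpoint, $1\to0$ lets the adversary return to $a$ and loop through $\alpha$, and every other output stops early or requests an absent color. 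Hence $\alpha=1\to1$.
\item Then split on the four values of $\beta=\xi(0;\{\!\{0,1\}\!\})$, which the start-$a$ run uses at $(b;1,0,0)$. After one more step, the values $0\to0$ and $1\to0$ lead it to $(b;1,0,1)$ and $(b;1,1,1)$.
\item The value $0\to1$ leads to $(a;1,0,0)$. There $0\to0$ gives the initial configuration at $b$, and $1\to0$ loops through $\beta$.
\item The value $1\to1$ leads to $(a;1,1,0)$. There $0\to1$ gives $(b;0,1,0)$, and $1\to1$ gives $(b;1,1,0)$, which the start-$b$ run also reaches via $a$ and $\alpha$.
\end{itemize}
At both endpoint observations, the remaining outputs stop with $c$ unvisited or request an absent color.

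The device that closes most of these branches is not the one you describe, and your description needs repair. Two runs that ``coincide up to reflection'' merge only if the mirrored start differs from the other start. This works for the start-$a$ and start-$b$ runs, because the mirror of $b$ is $b$. But the start-$a$ and start-$c$ runs always coincide up to reflection, so for that pair it gives nothing. What does give a merge is a configuration fixed by the reflection, with the agent at $b$ and equal colors on $a$ and $c$, such as $(b;1,0,1)$, $(b;1,1,1)$, and $(b;0,1,0)$ above. If the start-$a$ run reaches one of these, the mirrored start-$c$ run reaches the same one, and Lemma~\ref{lem:execution-merging} applies. Your own example illustrates this. With $\beta=1\to0$ and $\alpha=1\to1$, the two runs you track jointly reach $(b;1,1,1)$ and $(b;1,1,0)$, which do not coincide even up to reflection. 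The branch closes at once only because $(b;1,1,1)$ is reflection-fixed.
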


\begin{proof}
    We write $P_3=(v_1,v_2,v_3)$.
    By adding unused colors if necessary and renaming colors, we may assume w.l.o.g.~that $C=\{0,1\}$, with $0$ as the initial color.
    The notation $a\to b$ means that the agent recolors the current vertex to $a$ and requests color $b$.
    The notation $(v_i;a,b,c)$ denotes the configuration with the agent at $v_i$ and colors $(a,b,c)$ on $(v_1,v_2,v_3)$.
    By Lemma~\ref{lem:stay-elimination}, we may assume that no reachable observation uses $\cstay$.
    Initially, stopping leaves vertices unvisited, and only color $0$ is available as a destination.
    Moving without recoloring reaches the initial configuration for a start at the destination.
    Hence Lemma~\ref{lem:execution-merging} forces $(0;\{0\})\mapsto 1\to0$ and $(0;\{0,0\})\mapsto 1\to0$.
    We set $\alpha=\xi(0;\{1\})$ and $\beta=\xi(0;\{0,1\})$.

    We start at $v_2$, and the adversary chooses $v_1$ after the forced initial action.
    The agent then observes $(0;\{1\})$ at $v_1$.
    The vertex $v_3$ remains unvisited.
    Hence $\alpha$ cannot stop and is either $0\to1$ or $1\to1$.
    We first suppose $\alpha=0\to1$.
    The execution reaches $(v_2;0,1,0)$, with observation $(1;\{0,0\})$.
    There, $0\to0$ reaches an endpoint with every vertex colored 0, which is also the initial configuration for a start at that endpoint.
    With output $1\to0$, the adversary can choose $v_1$ as the destination, after which $\alpha$ returns the agent to $(v_2;0,1,0)$.
    Repeating this choice prevents termination.
    Every other action either stops with $v_3$ unvisited or requests an absent color.
    Thus $\alpha=1\to1$.

    Starting at $v_1$, the forced first action reaches observation $(0;\{0,1\})$ at $v_2$.
    The vertex $v_3$ remains unvisited.
    Hence $\beta$ cannot stop, and it remains to consider its four possible move values.

    If $\beta=0\to0$, the executions started at $v_1$ and $v_3$ both reach $(v_2;1,0,1)$.
    If $\beta=1\to0$, the same two executions, using $\alpha=1\to1$, both reach $(v_2;1,1,1)$.

    If $\beta=0\to1$, the execution starting at $v_1$ reaches $(v_1;1,0,0)$ with observation $(1;\{0\})$.
    At this observation, output $0\to0$ reaches the initial configuration for a start at $v_2$.
    Output $1\to0$ followed by $\beta$ returns to $(v_1;1,0,0)$, so these two actions repeat without termination.

    Finally, if $\beta=1\to1$, the execution starting at $v_1$ reaches $(v_1;1,1,0)$ with observation $(1;\{1\})$.
    With output $0\to1$, the executions starting at $v_1$ and $v_3$ both reach $(v_2;0,1,0)$.
    With output $1\to1$, the execution starting at $v_1$ and the execution starting at $v_2$ after its forced first move and $\alpha$ both reach $(v_2;1,1,0)$.
    At observations $(1;\{0\})$ and $(1;\{1\})$, stopping leaves $v_3$ unvisited.
    Requesting an absent color also fails.
    Thus every two-color rule either has a failing execution or allows executions from distinct starts to reach the same configuration.
    In the latter case, Lemma~\ref{lem:execution-merging} also gives a failing maximal execution, so no rule with at most two colors explores $P_3$.
\end{proof}

Since $P_3$ is a tree, Theorem~\ref{thm:tc3} and Lemma~\ref{lem:p3-two-color} show that the class of all trees and simple cycles has color complexity exactly three.

\section{Four-Color Exploration}
\label{sec:uniform-return}

In this section, we present the four-color algorithm $\mathcal A_{\mathrm{C4}}$ (Algorithm~\ref{alg:cactus4}).
It explores every graph in $\gcb$ in $O(n)$ moves.
Each edge of $K_{1,q}$ forms a bridge block $K_{1,1}$, so it suffices to consider complete bipartite blocks $K_{1,1}$ and $K_{p,q}$ with $p,q\ge2$.

\begingroup
\begin{algorithm}[!t]
    \caption{Four-Color Exploration $\mathcal A_{\mathrm{C4}}$}
    \label{alg:cactus4}
    \Colors{
        $C_{\mathrm{C4}}=\{\cinit,\cfin,\cfront,\cpath\}$ and
        $c_0=\cinit$
    }
    \RuleFunction{\\
        \begin{align*}
            \xi_{\mathrm{C4}}(c,M) & =
            \begin{cases}
                (\cfin,\cstop)   & \mif \cinit,\cfront,\cpath\notin M
                \quad\hfill\ (\rlabel{C1})                             \\
                (\cfront,\cinit) & \melif \cfin,\cfront,\cpath\notin M
                \quad\hfill\ (\rlabel{C2})                             \\
                \eta_c(M)        & \moth
            \end{cases}
            \quad(c\neq\cfin),                              \\
            \eta_{\cinit}(M)       & =
            \begin{cases}
                (\cfin,\cpath)    & \mif \cfront,\cinit\notin M
                \hfill\ (\rlabel{C3})                           \\
                (\cfin,\cfront)   & \melif \cinit\notin M
                \hfill\ (\rlabel{C4})                           \\
                (\cpath,\cinit)   & \melif \cfront,\cpath\in M
                \hfill\ (\rlabel{C5})                           \\
                (\cfront,\cinit)  & \melif \cpath\in M
                \hfill\ (\rlabel{C6})                           \\
                (\cfront,\cfront) & \melif \cfront\in M
                \hfill\ (\rlabel{C7})
            \end{cases} \\
            \eta_{\cfront}(M)      & =
            \begin{cases}
                (\cpath,\cfront) & \mif \cfront\in M
                \hfill\ (\rlabel{C8})                 \\
                (\cfront,\cinit) & \melif \cinit\in M
                \hfill\ (\rlabel{C9})                 \\
                (\cfin,\cpath)   & \moth
                \hfill\ (\rlabel{C10})
            \end{cases}           \\
            \eta_{\cpath}(M)       & =
            \begin{cases}
                (\cfront,\cinit) & \mif \cfront\notin M\land\cinit\in M
                \quad\hfill\ (\rlabel{C11})                             \\
                (\cpath,\cinit)  & \melif \cfront,\cinit\in M
                \hfill\ (\rlabel{C12})                                  \\
                (\cfin,\cfront)  & \melif \cfront\in M
                \hfill\ (\rlabel{C13})                                  \\
                (\cfin,\cpath)   & \moth
                \hfill\ (\rlabel{C14})
            \end{cases}
        \end{align*}
    }
\end{algorithm}
\endgroup

Moves to initial-color vertices leave their sources in $\cfront$ or $\cpath$, distinguishing their behavior on return.
Recoloring a vertex $\cfin$ \emph{completes} it.
Since $\cfin$ is never requested, completed vertices are never revisited and need no rule-table entry.
At any noninitial arrival at an $\cinit$ vertex, its source remains a $\cfront$ or $\cpath$ neighbor.
Thus an $\cinit$-vertex observation whose neighbor colors are exactly $\cinit$ and $\cfin$ is unreachable and may have an arbitrary output.
Figure~\ref{fig:c4-dynamics} shows two executions.

To analyze the exploration, we use the following tree to represent the connections between blocks.
For nonisolated $s$, the \emph{block-cut tree} $\mathcal T_G$ has one node for each block and one node for each cut vertex, with adjacency given by containment.
We root $\mathcal T_G$ at the cut-vertex node $s$ when it exists and at the unique block containing $s$ otherwise.
For a nonroot block $B$, let $\rho(B)$ be its parent cut vertex.
For a root block $B_0$, set $\rho(B_0)=s$.
Let $G_B$ be the union of the blocks in the subtree rooted at $B$.
We explore $G_B$ from $r=\rho(B)$, called the \emph{parent attachment}.
A \emph{child block} of $B$ is a block node $D$ whose parent cut-vertex node is a child of $B$ in $\mathcal T_G$.
We call $v=\rho(D)\in V(B)$ its \emph{child attachment}.
The color of a parent attachment at entry means its color after the recoloring in the move into the block.

\begin{samepage}
    \begin{jlemma}
        \label{lem:cb-block-call}
        Let $B$ be a block with parent attachment $r$.
        During an execution of Algorithm~\ref{alg:cactus4}, suppose that the agent moves from $r$ to an $\cinit$-colored vertex of $B$ according to its rules, and every vertex of $G_B-r$ still has color $\cinit$.
        The color of $r$ at entry is $\cfront$ or $\cpath$.
        For every adversarial choice, the agent remains in $G_B$ and never requests an absent color until its final return to $r$.
        In finite time, it colors every vertex of $G_B-r$ with $\cfin$ and makes its final return to $r$.
        \begin{enumerate}[(i)]
            \item If $r$ is $\cpath$ at entry, the agent returns with $r$ in $\cpath$.
            \item If $r$ is $\cfront$ at entry, the agent visits $r$ at most once before the final return.
                  Such a visit immediately follows entry.
                  The agent recolors $r$ to $\cpath$ and returns to the vertex just entered.
                  At the final return, $r$ has color $\cfront$ or $\cpath$.
        \end{enumerate}
    \end{jlemma}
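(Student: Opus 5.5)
We prove the lemma by strong induction on the height of $B$ in the rooted block-cut tree $\mathcal T_G$, i.e., on the number of blocks in $G_B$. The induction hypothesis is applied to each child block $D$ of $B$ at its child attachment $v=\rho(D)$. Inside $B$ we argue separately for the three block shapes: a bridge $K_{1,1}$, a cycle, and $K_{p,q}$ with $p,q\ge2$. The inductive step then has two tasks. We trace the rule applications of Algorithm~\ref{alg:cactus4} at the vertices of $B$. Whenever the agent leaves a vertex $v$ of $B$ toward an $\cinit$ vertex outside $B$, we invoke the hypothesis for the corresponding child block as a black box. The black box says the excursion finishes that child's subtree, leaves no $\cinit$ vertex behind, and returns to $v$ with color $\cpath$ (case (i)) or $\cfront$/$\cpath$ (case (ii)). In case (ii) it may also make one immediate bounce off $v$. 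The key observation is that completed vertices are $\cfin$, which is never requested. Also, each child excursion ends with $v$ recolored to $\cfront$ or $\cpath$ and no $\cinit$ neighbors remaining in that child's subtree. So from the viewpoint of rules C1--C14 at $v$, a finished child block is indistinguishable from a set of $\cfin$ neighbors. This lets us analyze $B$ as if its child subtrees had been removed.

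For the block itself, we maintain an explicit invariant describing the coloring of $B$. For a cycle $r=w_0,w_1,\dots,w_{k-1}$, the invariant is the following. The agent enters $w_1$, and C5 or C6 marks the current endpoint $\cpath$ or $\cfront$ while proceeding along $\cinit$ vertices, much as in the simple-cycle case of Theorem~\ref{thm:tc3}. The prefix is colored by $\cpath$/$\cfront$ according to whether a $\cfront$ or $\cpath$ neighbor was seen. When the agent reaches $w_{k-1}$, whose other neighbor is $r$, rules C7/C9/C13 detect the closing $\cfront$ neighbor. The agent then reverses and completes vertices with C4, C10, C13 and C14 on the way back. We must also handle the first step, where the entered vertex sees $r$ colored $\cfront$. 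Here rule C7 sends the agent back to $r$, and at $r$ C8 recolors $r$ to $\cpath$ and returns. This is exactly the single visit allowed in (ii), and the case analysis must show that no further visit to $r$ occurs before the final return. For $K_{p,q}$ the invariant is different. One side gets marked $\cfront$ while the agent alternates between the sides through $\cinit$ vertices. We show that each vertex is completed exactly once, and that the last completion requests $\cpath$ or $\cfront$ with $r$ as the unique such neighbor. The bridge case is then immediate.

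The final claims follow by checking the last rule applied before returning to $r$. In (i), $r$ is $\cpath$ and never revisited before the final return, since the agent only requests $\cfront$ or $\cinit$ at vertices whose unique neighbor of that color is inside $G_B$. In (ii), we distinguish whether the bounce occurred. The absence of absent-color requests follows because every rule used in the trace requests a color that the invariant certifies is present. Finiteness follows because every move either recolors an $\cinit$ vertex, completes a vertex, or is one of $O(1)$ bounded bounce moves per vertex.

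I expect the main obstacle to be the $K_{p,q}$ case, together with the interaction between a vertex's role in $B$ and its pending child blocks. When the agent is at a vertex $v$ of $B$, its $\cinit$ neighbors may lie in $B$ or in child blocks, and the adversary picks among them. So the invariant must tolerate arbitrary interleaving of child excursions with the progress in $B$. My first attempt would be to prove that the color of $v$ after a child excursion equals its color before, up to the $\cfront\to\cpath$ upgrade permitted in (ii). Then the in-block trace becomes independent of the order in which the adversary schedules child blocks.
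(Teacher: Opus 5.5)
Your skeleton matches the paper's: induction on height in the block-cut tree, child blocks as black boxes, separate treatment of bridges, cycles and $K_{p,q}$, and the single C7--C8 bounce at $r$. But the in-block invariants, which carry the whole content of the lemma, are either misdescribed or missing.

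\textbf{The cycle trace.} The trace you sketch does not occur. After the first vertex $w_1$ is marked, any $w_i$ with $2\le i\le k-2$ sees a $\cfront$ predecessor and a $\cinit$ successor but no $\cpath$ neighbor. So neither C5 nor C6 applies there. The forward phase is instead the leapfrog C9, C7, C8. C9 moves from the $\cfront$ endpoint. C7 colors the new vertex $\cfront$ and returns to its unique $\cfront$ neighbor. C8 colors the old endpoint $\cpath$ and moves back. The invariant is that the recolored arc from $r$ is $\cpath$ except for its far endpoint, which is $\cfront$. At $w_{k-1}$ the closing neighbor $r$ is $\cpath$, not $\cfront$. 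The turn is triggered by C4 (no $\cinit$ neighbor), or by C5 followed later by C13 when $w_{k-1}$ has pending child blocks. You also never justify why C8 at $r$ returns to the entered vertex. The reason is that every rule that enters with $r$ colored $\cfront$ (C2, C6, C9, C11) requires $r$ to have no $\cfront$ neighbor.

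\textbf{The $K_{p,q}$ case.} You flag this as the main obstacle, but nothing is proved, and this is where the adversary has genuine choices. Three ideas are missing.
\begin{enumerate}[(a)]
\item A color-class invariant: $r$ is $\cpath$, $X-r$ uses only $\cinit,\cfront,\cfin$, and $Y$ uses only $\cinit,\cpath,\cfin$.
\item Same-colored vertices in one part have identical neighborhoods in $B$, so all adversarial choices among them are equivalent.
\item Track the counts $(f,h)$ of $\cfront$ vertices in $X-r$ and $\cpath$ vertices in $Y$. Once $B$ has no $\cinit$ vertex, they equal $(k,k)$ or $(k-1,k)$. They then alternate C10: $(k,k)\mapsto(k-1,k)$ and C13: $(k-1,k)\mapsto(k-1,k-1)$, keeping $h\ge1$, until at $(0,1)$ the last $y$ sees $r$ as its unique $\cpath$ neighbor.
\end{enumerate}
Invariant (c) is what guarantees that every $\cpath$ request has a target. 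It also guarantees that no $y\in Y$ requests $\cpath$ before $X-r$ is entirely $\cfin$, which matters because every such $y$ is adjacent to $r$. Your argument for not revisiting $r$ considers only $\cfront$/$\cinit$ requests and overlooks exactly these $\cpath$ requests by C3, C10 and C14.

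\textbf{Child blocks.} ``Analyze $B$ as if its child subtrees were removed'' is not valid as stated. Pending children change which rule fires at an attachment: C5 instead of C4, C6 instead of C3, C12 instead of C13. The adversary can also postpone a child to the return phase, where C11 turns a $\cpath$ vertex back into $\cfront$. In $K_{p,q}$ this can even take $(f,h)$ to $(0,0)$ at the last $y$. What makes the black box work is the following. After a child call entered from a $\cfront$ attachment, that attachment still has no $\cfront$ neighbor. So if it returns $\cpath$, C11 produces the same output as C9 and C14 the same as C10. You need to state this and verify it in each block case.
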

\end{samepage}

\begin{figure}[!t]
    \centering
    \definecolor{c4init}{RGB}{235,235,235}
    \definecolor{c4front}{RGB}{213,122,54}
    \definecolor{c4path}{RGB}{64,111,171}
    \definecolor{c4fin}{RGB}{104,159,96}
    \begin{tikzpicture}[
        x=0.73cm,y=0.80cm,font=\footnotesize,
        v/.style={circle,draw=black!65,semithick,minimum size=5.3mm,inner sep=0pt},
        iv/.style={v,fill=c4init}, Fv/.style={v,fill=c4front!30}, Pv/.style={v,fill=c4path!25}, fv/.style={v,fill=c4fin!45},
        here/.style={draw=black,line width=1.1pt},
        flow/.style={-{Latex[length=1.6mm]},semithick},
        translabel/.style={fill=white,inner sep=1.2pt,align=center,yshift=11mm}
        ]
        \node[font=\small\bfseries,anchor=west] at (-1.1,2.35) {(a) Bridge and cycle};
        \foreach \sh/\cs/\ca/\cb/\cc/\cd/\hs/\ha/\hb/\hc/\hd in {
                0/iv/iv/iv/iv/iv/here//////,
                4.0/Pv/Fv/iv/iv/iv//here////,
                8.0/Pv/Pv/Pv/Fv/iv////here//,
                12.0/Pv/Pv/Pv/fv/fv///here///,
                16.0/fv/fv/fv/fv/fv/here//////}{
                \begin{scope}[shift={(\sh,0)}]
                    \node[\cs,\hs] (s) at (-1.05,0) {$s$};
                    \node[\ca,\ha] (a) at (0,0) {$a$};
                    \node[\cb,\hb] (b) at (0.7,0.72) {$b$};
                    \node[\cc,\hc] (c) at (1.3,0) {$c$};
                    \node[\cd,\hd] (d) at (0.7,-0.72) {$d$};
                    \draw (s)--(a)--(b)--(c)--(d)--(a);
                \end{scope}}
        \draw[flow] (1.8,0)--node[above,translabel]{C2, C7, C8} (2.45,0);
        \draw[flow] (5.8,0)--node[above,translabel]{$(\mathrm{C9,C7,C8})^2$} (6.45,0);
        \draw[flow] (9.8,0)--node[above,translabel]{C9, C4, C10} (10.45,0);
        \draw[flow] (13.8,0)--node[above,translabel]{$\mathrm{C14}^2$, C1} (14.45,0);

        \node[font=\small\bfseries,anchor=west] at (-1.1,-1.95) {(b) Complete bipartite block $K_{3,2}$};
        \foreach \sh/\cr/\cxa/\cxb/\cya/\cyb/\hr/\hxa/\hxb/\hya/\hyb/\fh in {
        0/iv/iv/iv/iv/iv/here//////{},
        4.0/Pv/iv/iv/Fv/iv////here//{},
        8.0/Pv/Fv/iv/Pv/Pv///here///{$(f,h)=(1,2)$},
        12.0/Pv/Fv/fv/fv/Pv//here////{$(f,h)=(1,1)$},
        16.0/fv/fv/fv/fv/fv/here//////{}}{
        \begin{scope}[shift={(\sh,-4.5)}]
            \node[\cr,\hr] (r) at (-0.5,0.9) {$r$};
            \node[\cxa,\hxa] (xa) at (-0.5,0) {$x_1$};
            \node[\cxb,\hxb] (xb) at (-0.5,-0.9) {$x_2$};
            \node[\cya,\hya] (ya) at (1.0,0.45) {$y_1$};
            \node[\cyb,\hyb] (yb) at (1.0,-0.45) {$y_2$};
            \draw (r)--(ya) (r)--(yb) (xa)--(ya) (xa)--(yb) (xb)--(ya) (xb)--(yb);
            \node[anchor=north,inner sep=1pt] at (0.25,-1.3) {\fh};
        \end{scope}}
        \draw[flow] (1.5,-4.5)--node[above,translabel]{C2, C7, C8} (3.0,-4.5);
        \draw[flow] (5.5,-4.5)--node[above,translabel]{C9, C7, C8,\\C9, C5} (7.0,-4.5);
        \draw[flow] (9.5,-4.5)--node[above,translabel]{C3, C13} (11.0,-4.5);
        \draw[flow] (13.5,-4.5)--node[above,translabel]{C10, C14, C1} (15.0,-4.5);
    \end{tikzpicture}
    \caption{Traces of Algorithm~\ref{alg:cactus4} for one adversary.
        Gray, orange, blue, and green denote $\cinit$, $\cfront$, $\cpath$, and $\cfin$.
        Heavy outlines mark the agent.
        In (b), $r$ is the starting vertex, $X=\{r,x_1,x_2\}$ and $Y=\{y_1,y_2\}$, and $(f,h)$ counts the $\cfront$ vertices of $X-r$ and the $\cpath$ vertices of $Y$.}
    \label{fig:c4-dynamics}
\end{figure}

\begin{proof}
    We use induction on the height of $B$ in the block-cut tree.
    We first show how the induction hypothesis allows us to omit exploration below child blocks, then analyze the moves within $B$.
    A move to an $\cinit$ vertex leaves its source in $\cfront$ under \rlabels{C2, C6, C9, C11}, which require no $\cfront$ neighbor, or in $\cpath$ under \rlabels{C5, C12}, which require one.

    \emph{Child blocks.}
    On first entry into a child block $D$ from $v=\rho(D)$, all of $G_D-v$ is $\cinit$-colored by the block-cut-tree structure.
    By induction, the agent completes $G_D-v$ and returns to $v$ in finite time without changing $B-v$.
    The resulting $\cfin$-colored neighbors do not affect \rlabels{C9--C14}.
    After return from a $\cpath$ entry, \rlabel{C12} enters another $\cinit$ neighbor if one remains; otherwise \rlabel{C13} completes $v$ toward the $\cfront$ side of $B$.
    After return from a $\cfront$ entry, \rlabel{C9} or \rlabel{C11} enters another $\cinit$ neighbor if one remains; otherwise \rlabel{C10} or \rlabel{C14} completes $v$ toward the $\cpath$ side.
    Thus child exploration preserves the choice between advancing and returning in $B$.
    Each child exploration completes every vertex other than its attachment, so only finitely many occur.
    We omit their moves below.
    Between these explorations, every vertex of $G_B-B$ has color $\cinit$ or $\cfin$.

    \emph{Initial $\cfront$ entry.}
    For an initial $\cfront$ entry at an $\cinit$-colored vertex $v$, the entering rule is applicable only when $r$ has no $\cfront$ neighbor.
    If $v$ has an $\cinit$-colored neighbor, \rlabel{C7} makes $v$ the unique $\cfront$ neighbor of $r$ and moves back to $r$.
    \rlabel{C8} must therefore return to $v$.
    Together these rules change $r$ from $\cfront$ to $\cpath$ and leave $v$ in $\cfront$.
    If $v$ has no $\cinit$-colored neighbor, \rlabel{C4} completes $v$ and returns directly to $r$.

    \emph{Bridge.}
    For a bridge $\{r,v\}$, the explorations below child blocks at $v$ finish before the agent colors $v$ with $\cfin$ and returns to $r$.
    A $\cpath$ entry uses \rlabel{C3} directly, or \rlabel{C6} followed by \rlabel{C10} or \rlabel{C14} after explorations below child blocks.
    A $\cfront$ entry uses \rlabel{C4} directly, or \rlabels{C7--C8} first change $r$ from $\cfront$ to $\cpath$ before \rlabel{C10} or \rlabel{C14} eventually returns there.

    \emph{Simple cycle.}
    We write the cycle as $r,v_1,\dots,v_m,r$ in the direction fixed on entry.
    Between the rule sequences described below, the cycle vertices recolored so far form an arc from $r$ whose far endpoint is $\cfront$ and whose other vertices are $\cpath$.
    All vertices in the interior of the other arc still have color $\cinit$.
    A $\cpath$ entry establishes this form when \rlabel{C6} recolors the first vertex.
    For a $\cfront$ entry, \rlabels{C7--C8} first change $r$ to $\cpath$ and then reach the same form.
    Each advance uses \rlabel{C9} followed by \rlabels{C7--C8}, recoloring the old endpoint $\cpath$ and moving the $\cfront$ endpoint one step along the arc with $\cinit$-colored interior vertices.
    The \rlabel{C9} move may instead enter a child block at the $\cfront$ endpoint.
    By the induction hypothesis, the agent returns to the endpoint with its color $\cfront$ or $\cpath$, and \rlabel{C9} or \rlabel{C11} repeats the move because no neighbor of the endpoint is $\cfront$.
    Once the move reaches the next cycle vertex, \rlabels{C7--C8} restore the form above.
    When no $\cinit$-colored cycle vertex remains, \rlabel{C4} or \rlabel{C13}, followed by \rlabel{C10} or \rlabel{C14}, makes the agent reverse direction.
    \rlabel{C14} then completes one $\cpath$ vertex per step until it returns to $r$.
    \rlabel{C11} first processes any $\cinit$-colored child encountered during the return phase.
    The agent visits $r$ before the final return only during the initial \rlabels{C7--C8}.

    \emph{Complete bipartite block.}
    We denote the bipartition of $B=K_{p,q}$ by $(X,Y)$, where $r\in X$ and $p,q\ge2$.
    Between the rule sequences below, $r$ has color $\cpath$, every vertex of $X-r$ has color $\cinit$, $\cfront$, or $\cfin$, and every vertex of $Y$ has color $\cinit$, $\cpath$, or $\cfin$, except in the final step described last.
    The pair $(f,h)$ counts the $\cfront$ vertices in $X-r$ and the $\cpath$ vertices in $Y$.
    Same-colored vertices in one partite set have identical neighborhoods in $B$, so every adversarial choice gives the same count change.
    A $\cpath$ entry reaches a state with $r$ and one vertex of $Y$ in $\cpath$ and one vertex of $X-r$ in $\cfront$ by \rlabel{C6} and \rlabels{C7--C8}.
    A $\cfront$ entry reaches the same state after \rlabels{C7--C8} change $r$ to $\cpath$, followed by \rlabel{C9} and another application of \rlabels{C7--C8}.
    In both cases, $(f,h)=(1,1)$.

    \emph{Advance.}
    When both sides contain $\cinit$-colored vertices, each iteration uses \rlabel{C9}, \rlabel{C5}, and \rlabel{C6} to change one vertex of $X-r$ to $\cfront$ and one vertex of $Y$ to $\cpath$, increasing both counts by one.
    \rlabel{C11} and \rlabel{C12} resume the same process after explorations below child blocks.

    \emph{Exhaustion of one side.}
    If $Y$ runs out of $\cinit$-colored vertices first, the agent completes each remaining $\cinit$-colored vertex of $X$ and returns to $Y$, leaving $(f,h)=(q-1,q)$.
    This uses \rlabel{C3}, or \rlabel{C6} followed by \rlabel{C10} or \rlabel{C14} if explorations below child blocks intervene.
    \rlabel{C12} selects another $\cinit$-colored vertex of $X$ until none remains.
    If $X$ runs out first, the agent similarly completes each remaining $\cinit$-colored vertex of $Y$ and returns to $X-r$, leaving $(f,h)=(p-1,p-1)$.
    This uses \rlabel{C4}, or \rlabel{C5} followed by \rlabel{C13} after explorations below child blocks.
    \rlabel{C9} or \rlabel{C11} selects the next $\cinit$-colored vertex.

    \emph{Return.}
    When $B$ has no $\cinit$ vertex, the return phase starts with $(f,h)=(k,k)$ or $(k-1,k)$ for some $k\ge1$.
    Without child calls, \rlabel{C10} maps $(k,k)$ to $(k-1,k)$, and \rlabel{C13} maps $(k-1,k)$ to $(k-1,k-1)$ for $k>1$.
    These moves reach $(0,1)$ with $h\ge1$ throughout.
    Child calls during the return phase change the counts only temporarily.
    At a vertex of $Y$ with $f\ge1$, child calls use \rlabel{C12} and preserve the attachment's $\cpath$ color.
    A child call at $x\in X-r$ may return with $x$ in $\cpath$ instead of $\cfront$, decreasing $f$ by one.
    If an $\cinit$ neighbor remains, \rlabel{C11} restores $x$ to $\cfront$ and restores the counts.
    Otherwise \rlabel{C14} completes $x$ and moves to a $\cpath$ vertex of $Y$, which exists because $h\ge1$.
    At $y\in Y$ with $(f,h)=(0,1)$, remaining child calls use \rlabel{C11}, then \rlabel{C9} or \rlabel{C11}.
    These calls may change $y$ to $\cfront$ and the counts to $(0,0)$.
    This is the final step that leaves the color pattern above.
    After all children finish, \rlabel{C10} or \rlabel{C14} completes $y$ and returns to $r$, its unique $\cpath$ neighbor.
    Each return move completes one vertex of $B-r$, so the phase terminates at $r$.

    The three cases prove the lemma for $B$ whenever it holds for its children.
    A leaf block has no explorations below child blocks, so induction on block height completes the proof.
\end{proof}

\begin{jtheorem}
    \label{thm:cb4}
    Every graph in $\gcb$ can be explored by Algorithm~\ref{alg:cactus4} using four colors and $O(n)$ moves.
\end{jtheorem}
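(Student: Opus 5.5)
The plan is to derive the theorem from Lemma~\ref{lem:cb-block-call} by analyzing the first few steps at the start $s$. The argument then applies the lemma to each block containing $s$. Finally, it bounds the number of moves.

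\emph{Isolated start.} If $s$ is isolated, $M$ is empty. \rlabel{C1} then stops immediately at $s$, which succeeds.

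\emph{Nonisolated start.} Initially every vertex is $\cinit$, so \rlabel{C2} applies. It recolors $s$ to $\cfront$ and moves to an $\cinit$ neighbor. This neighbor lies in some block $B$ with $\rho(B)=s$. This holds whether the root of $\mathcal T_G$ is the cut-vertex node $s$ or the unique block containing $s$.

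At this moment every vertex of $G_B-s$ is $\cinit$, and $s$ is $\cfront$ at entry. So the hypotheses of Lemma~\ref{lem:cb-block-call} hold. By part (ii), the agent completes $G_B-s$ and stays inside $G_B$. It returns finally to $s$ with $s$ colored $\cfront$ or $\cpath$, and requests no absent color along the way.

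Back at $s$, the neighbors of $s$ are $\cfin$ (completed blocks) or $\cinit$ (untouched blocks at $s$). There are no $\cfront$ or $\cpath$ neighbors, because the only such vertices ever created lie in blocks already completed.

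\emph{Later blocks at $s$.} If an $\cinit$ neighbor remains, the observation has $c\ne\cfin$ and $\cinit\in M$, so \rlabel{C1} and \rlabel{C2} do not apply.
\begin{itemize}
\item If $s$ is $\cfront$, \rlabel{C9} keeps $s$ in $\cfront$ and enters another block at $s$.
\item If $s$ is $\cpath$, \rlabel{C11} recolors $s$ to $\cfront$ and enters another block.
\end{itemize}
In both cases the next block is entered with $s$ in $\cfront$ and its subtree entirely $\cinit$, since distinct blocks at $s$ have disjoint $G_B-s$. Lemma~\ref{lem:cb-block-call} applies again.

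I should also check the case where $s$ is $\cfront$ and all neighbors are $\cfin$ except none are $\cinit$. Then $\cinit,\cfront,\cpath\notin M$, and \rlabel{C1} fires before $\eta$. So once all blocks at $s$ are done, \rlabel{C1} stops at $s$ regardless of its color. Every vertex belongs to some $G_B$ with $\rho(B)=s$, so every vertex has been visited. Executions are finite by induction on the number of remaining blocks at $s$.

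\emph{Move count.} Here I would strengthen the lemma's conclusion by charging moves. Within a cycle block, each vertex is touched by a constant number of rules: advance via C9, C7, C8, then return via C14. In a $K_{p,q}$ block, each vertex of $B-r$ is entered a constant number of times in the advance, exhaustion, and return phases. Each child block costs one entry and one return at its attachment, plus a constant number of repeats of the \rlabel{C9}/\rlabel{C11} request.

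This charging is the step I expect to be the main obstacle. The lemma as stated only asserts finiteness, so one must recheck its case analysis to see that each rule application can be charged to the completion of a distinct vertex or to a child-block call. I would argue this by observing two facts:
\begin{itemize}
\item Every move either recolors a vertex from $\cinit$ or completes a vertex, or belongs to a bounded pattern such as C7--C8 attached to such an event.
\item Each vertex changes color a bounded number of times, along $\cinit\to\cfront/\cpath\to\cfin$ with at most $O(1)$ toggles between $\cfront$ and $\cpath$ per incident child call.
\end{itemize}
The number of child calls is at most the number of blocks, which is less than $n$. This gives $O(n)$ moves in total, as claimed in the theorem.
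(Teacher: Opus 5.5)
Your proposal is correct and follows essentially the paper's proof: correctness comes from applying Lemma~\ref{lem:cb-block-call} to the blocks at $s$ one after another until \rlabel{C1} fires, and the $O(n)$ bound comes from charging moves to per-vertex events, using that no rule writes $\cinit$ and that $\cfin$ vertices never change. The paper's charging is a little more direct than your sketch, and it needs neither a re-examination of the lemma's case analysis nor your count of $\cfront$/$\cpath$ toggles per child call: every move into an $\cinit$ vertex (including those by \rlabels{C9, C11, C12}) is charged to its destination, each \rlabels{C7--C8} pair to the vertex executing \rlabel{C7}, and each completing move to the vertex it completes.
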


\begin{proof}
    Each first block entry follows an algorithm rule, and the initial-color premise of Lemma~\ref{lem:cb-block-call} holds because its parent attachment is the unique access to its block-cut subtree and only the current vertex is recolored.
    If $s$ is isolated, \rlabel{C1} completes it and stops.
    Otherwise each move from $s$ to an $\cinit$-colored neighbor enters an unprocessed block containing $s$.
    By Lemma~\ref{lem:cb-block-call}, the agent then completes all vertices other than $s$ in the block-cut subtree below that block and returns to $s$.
    That subtree is never entered again, so after finitely many such explorations every vertex other than $s$ is $\cfin$.
    \rlabel{C1} then completes $s$ and stops.

    We associate each move whose destination is an $\cinit$ vertex with that destination.
    We associate a pair \rlabels{C7--C8} with the $\cinit$-colored vertex at which \rlabel{C7} is executed, and a return move with the vertex that it changes to $\cfin$.
    The algorithm never recolors a vertex $\cinit$ after it leaves that color and never changes a $\cfin$ vertex.
    Thus only $O(1)$ moves are associated with each vertex, and the agent makes $O(n)$ moves.
\end{proof}

\section{Five-Color Exploration}
\label{sec:ktf-five}

In this section, we present the five-color algorithm $\aus$ (Algorithm~\ref{alg:uniform-five}).
It explores every graph in $\gktf$ in $O(n\Delta)$ moves.
In $\gktf$, every block containing a triangle is a clique.

\begingroup
\begin{algorithm}[!t]
    \caption{Five-Color Exploration $\aus$}
    \label{alg:uniform-five}
    \Colors{
        $\cus=\{\cinit,\cpath,\cneigh,\cfin,\cheadd\}$ and
        $c_0=\cinit$
    }
    \RuleFunction{\\
        \begin{align*}
            \xius(\cinit,M)  & \!\!=
            \begin{cases}
                (\cpath,\cinit)   & \mif \cinit\in M\land \cpath,\cneigh,\cfin,\cheadd\notin M
                \hfill\ (\rlabel{U1})                                                          \\
                (\cpath,\cstay)   & \melif \cneigh\in M\land\cheadd\notin M
                \hfill\ (\rlabel{U2})                                                          \\
                (\cneigh,\cheadd) & \melif \cheadd\in M\land\cpath\in M
                \hfill\ (\rlabel{U3})                                                          \\
                (\cheadd,\cheadd) & \melif \cheadd\in M\land\cpath\notin M
                \hfill\ (\rlabel{U4})                                                          \\
                (\cheadd,\cstay)  & \moth \hfill\ (\rlabel{U5})
            \end{cases} \\
            \xius(\cheadd,M) & \!\!=
            \begin{cases}
                (\cfin,\cinit)    & \mif \cinit,\cneigh\in M
                \land\cpath,\cheadd\notin M \hfill\ (\rlabel{U6}) \\
                (\cheadd,\cneigh) & \melif \cheadd,\cneigh\in M
                \hfill\ (\rlabel{U7})                             \\
                (\cpath,\cheadd)  & \melif \cheadd,\cpath\in M
                \hfill\ (\rlabel{U8})                             \\
                (\cfin,\cheadd)   & \melif \cheadd\in M
                \hfill\ (\rlabel{U9})                             \\
                (\cheadd,\cinit)  & \melif \cpath,\cinit\in M
                \hfill\ (\rlabel{U10})                            \\
                (\cheadd,\cpath)  & \melif \cpath\in M
                \hfill\ (\rlabel{U11})                            \\
                (\cpath,\cinit)   & \melif \cinit\in M
                \hfill\ (\rlabel{U12})                            \\
                (\cfin,\cstop)    & \moth \hfill\ (\rlabel{U13})
            \end{cases}                              \\
            \xius(\cpath,M)  & \!\!=
            \begin{cases}
                (\cinit,\cheadd)  & \mif \cneigh,\cheadd\in M
                \hfill\ (\rlabel{U14})                           \\
                (\cpath,\cneigh)  & \melif \cneigh\in M
                \hfill\ (\rlabel{U15})                           \\
                (\cheadd,\cstay)  & \melif \cheadd\notin M
                \hfill\ (\rlabel{U16})                           \\
                (\cheadd,\cheadd) & \moth \hfill\ (\rlabel{U17})
            \end{cases}                               \\
            \xius(\cneigh,M) & \!\!=
            \begin{cases}
                (\cinit,\cpath)  & \mif \cpath\in M\land\cheadd\notin M
                \hfill\ (\rlabel{U18})                                  \\
                (\cinit,\cheadd) & \moth \hfill\ (\rlabel{U19})
            \end{cases}
        \end{align*}
    }
\end{algorithm}
\endgroup

Algorithm~\ref{alg:uniform-five} contains the thirteen rules of the five-color algorithm for triangle-free graphs by Takahashi et al.~\cite{takahashi2026recolorable} as \rlabel{U1}, \rlabels{U3--U5}, \rlabels{U7--U13}, \rlabel{U17}, and \rlabel{U19}.
These rules check the $\cinit$-colored neighbors of the current path endpoint one at a time to determine whether it can extend the path.
They temporarily mark rejected candidates with $\cneigh$ and restore them to $\cinit$ before completing an expansion or a backtracking step, so that they can be tested again.
We call the removal of these marks \emph{cleanup}.
On triangle-free graphs, no rejected candidate is adjacent to the predecessor of the endpoint, so every mark is restored through the endpoint.
When the endpoint and its predecessor lie in a clique block, every rejected candidate is adjacent to the predecessor (Lemma~\ref{lem:ktf-uniform}).
The six new rules \rlabel{U2}, \rlabel{U6}, \rlabels{U14--U16}, and \rlabel{U18} handle this case.
They temporarily recolor the predecessor with $\cinit$, finish the endpoint, and restore the marks through the predecessor.
No rule requests $\cfin$ as a destination color, so the rule table omits the case in which the current vertex is colored $\cfin$.

\subsection{Semi-DFS and the two cleanup conditions}

A semi-DFS state consists of a path $P=(u_1,u_2,\dots,u_k)$ with $u_1=s$ and a finished set $F\subseteq V$.
The initial state is $((s),\emptyset)$.
We define $U(P,F)=\upf$ as the set of neighbors of $u_k$ that are neither finished nor on $P$ and are adjacent to no earlier vertex of $P$.
If $U(P,F)\ne\emptyset$, semi-DFS appends an arbitrary vertex of $U(P,F)$ to $P$.
This is a semi-DFS \emph{expansion}.
If $U(P,F)=\emptyset$ and $k\ge2$, it deletes $u_k$ from $P$ and adds it to $F$.
This is semi-DFS \emph{backtracking}.
If $P=(s)$ and $U(P,F)=\emptyset$, it adds $s$ to $F$ and stops.

\begin{jlemma}[Semi-DFS~\cite{takahashi2026recolorable}]
    \label{lem:semidfs}
    The path $P$ remains induced throughout semi-DFS.
    After at most $2n$ iterations, the procedure has visited every vertex and terminates at $s$.
\end{jlemma}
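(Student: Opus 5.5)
The plan is to prove the two claims separately: a structural invariant for inducedness and a potential/counting argument for termination and coverage.

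\emph{Inducedness.} I would argue by induction over iterations that $P$ is an induced path and that $V(P)\cap F=\emptyset$. The initial state $((s),\emptyset)$ is trivially induced. A backtracking step deletes the last vertex, and a deletion from the end of an induced path leaves an induced path. For an expansion, the appended vertex $w\in U(P,F)$ is adjacent to $u_k$. By the definition of $U(P,F)$, $w$ is not on $P$ and is adjacent to none of $u_1,\dots,u_{k-1}$. Hence the only edge between $w$ and $V(P)$ is $\{u_k,w\}$, and $(u_1,\dots,u_k,w)$ is again an induced path. Since $w\notin F$ and backtracking moves a vertex from $P$ to $F$, the sets $V(P)$ and $F$ stay disjoint, and every vertex ever on $P$ is on $P$ or in $F$ thereafter.

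\emph{Iteration bound.} Each vertex enters $P$ at most once, because after leaving $P$ it is in $F$, and $U$ excludes $F$. Each vertex enters $F$ exactly once, when it is deleted from $P$ or when $s$ stops. Hence there are at most $n-1$ expansions and at most $n-1$ backtracks, plus the final stop, so at most $2n-1\le 2n$ iterations. The procedure stops only when $P=(s)$, so it terminates at $s$ with $F$ equal to the set of vertices ever placed on $P$. I would also check that the process never gets stuck: when $U=\emptyset$ either $k\ge2$ (backtrack) or $P=(s)$ (stop).

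\emph{Coverage.} This is the main obstacle, because the semi-DFS restriction rejects some neighbors, so it is not immediate that everything is visited. At termination let $F$ be the visited set and suppose some vertex lies outside $F$. By connectivity there is an edge $\{x,y\}$ with $x\in F$ and $y\notin F$. Consider the moment $x$ was finished, i.e.\ backtracked, or stopped if $x=s$. At that moment $U(P,F)=\emptyset$ with $u_k=x$. Since $y$ is not in $F$ at the end, it was not finished then. If $y$ had been on $P$, it would later be finished, a contradiction. So $y$ was excluded only because it is adjacent to some earlier path vertex $u_i$ with $i<k$.

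Among all such pairs I would choose $x$ to be the vertex finished \emph{last}. Then $u_i$ is finished after $x$, since it is still on $P$ when $x$ is removed, and $u_i$ is adjacent to $y\notin F$. This contradicts the choice of $x$ as the last-finished vertex with an unfinished neighbor. Hence every vertex is in $F$, so every vertex was visited.
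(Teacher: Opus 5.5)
The paper does not prove this lemma itself. It imports it from Takahashi et al., so there is no in-paper argument to compare with. Your proof is correct and self-contained.

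\emph{Inducedness and the iteration bound.} Inducedness is immediate from the definition of $U(P,F)$. Your count is also sound. Every vertex other than $s$ enters $P$ at most once, because $U(P,F)$ excludes $F\cup V(P)$ and $F$ only grows, and each backtrack removes such a vertex. This gives at most $n-1$ expansions, at most $n-1$ backtracks, and one stop. The case split on $k$ shows that the procedure never stalls, so it stops exactly when $P=(s)$.

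\emph{Coverage.} This is the only nontrivial part, and your extremal choice is what is needed. The naive invariant ``every neighbor of a finished vertex is finished or on $P$'' is false. On a triangle $s,a,y$ with first choice $a$, the vertex $y$ is rejected at $P=(s,a)$ and $a$ is finished, yet $y$ is neither finished nor on $P$. This happens because rejected candidates are left untouched when the head is finished. Taking the last-finished $x$ with an unfinished neighbor $y$ transfers the obligation to an earlier path vertex $u_i$, which is finished strictly later. An equivalent invariant formulation is that every unfinished neighbor of a finished vertex lies on $P$ or is adjacent to a vertex of $P$. At the final stop, $U=\emptyset$ with $k=1$ forces every neighbor of $s$ into $F$, and connectivity then gives $F=V$.

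Two small precision points. Before coverage is established, ``each vertex enters $F$ exactly once'' should read ``at most once''; only the upper bound is used in the count. When $x=s$ is finished by the stop, $k=1$ and there is no earlier path vertex, so the contradiction is immediate rather than going through $u_i$. Your wording covers this implicitly, but it is worth stating.
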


To formulate the two conditions, we call a semi-DFS state \emph{reachable} if arbitrary candidate choices can produce it from the initial state.
For such a state with $P=(u_1,\dots,u_k)$, we write $a=u_k$ for the head and $p=u_{k-1}$ for its predecessor when $k\ge2$.
We call any $b\in U(P,F)$ an eligible candidate and define the rejected set as
\[
    R(P,F)=\{x\in N(a)\setminus(F\cup V(P)) :
    N(x)\cap\{u_1,\ldots,u_{k-1}\}\neq\emptyset\}.
\]
We call any $x\in R(P,F)$ a rejected candidate.
The simulation represents $(P,F)$ by coloring $V\setminus(V(P)\cup F)$ with $\cinit$, $F$ with $\cfin$, $u_1,\dots,u_{k-1}$ with $\cpath$, and $u_k$ with $\cheadd$.

With these color roles, two moves during cleanup require structural conditions.
During expansion cleanup, the old and new endpoints $a,b$ both have color $\cheadd$ as the $\cneigh$ marks are restored.
If a rejected candidate $x$ were adjacent to $b$, the adversary could send the agent from $x$ to the wrong endpoint.
After a backtracking scan, either no rejected candidate is adjacent to $p$ and all marks are restored through $a$, or every rejected candidate must be adjacent to $p$ and to no earlier path vertex.
Otherwise the adversary could choose the wrong path vertex.

These two possible failures lead to the following conditions on every reachable state $(P,F)$.
\begin{enumerate}[(i)]
    \item \emph{Expansion separation}: for all $b\in U(P,F)$ and $x\in R(P,F)$, $\{b,x\}\notin E$.
    \item \emph{Backtracking uniformity}: if $k\ge2$, $p=u_{k-1}$, and $U(P,F)=\emptyset$, then either $R(P,F)\cap N(p)=\emptyset$, or every $x\in R(P,F)$ satisfies $\{x,p\}\in E$ and $N(x)\cap V(P)=\{p,a\}$.
\end{enumerate}
A graph has the \emph{uniform separation property} when both conditions hold for every reachable state from every start $s$.
Figure~\ref{fig:uniform-separation} shows the two configurations.

\begin{figure}[htbp]
    \centering
    \definecolor{semipath}{RGB}{44,101,154}
    \definecolor{semihead}{RGB}{201,91,38}
    \definecolor{semineigh}{RGB}{0,130,120}
    \begin{tikzpicture}[
            x=1cm,y=1cm,font=\small,
            pathv/.style={circle,draw=semipath,thick,fill=semipath!12,minimum size=5.6mm,inner sep=0pt},
            headv/.style={circle,draw=semihead,thick,fill=semihead!14,minimum size=5.6mm,inner sep=0pt},
            rej/.style={rectangle,draw=semineigh,thick,fill=semineigh!14,minimum size=5.6mm,inner sep=0pt},
            sel/.style={headv,very thick,minimum size=5.8mm},
            pathbox/.style={rounded corners=2pt,draw=semipath!45,fill=semipath!4},
            setbox/.style={rounded corners=3pt,draw=semineigh!70,densely dashed},
            ptitle/.style={font=\bfseries}
        ]
        \begin{scope}
            \node[ptitle] at (2.15,2.05) {(a) Expansion separation};
            \draw[pathbox] (-0.35,-0.36) rectangle (3.35,0.36);
            \node[anchor=east] at (-0.43,0) {$P$};
            \node[pathv] (ui) at (0,0) {$u_i$};
            \node[pathv] (p)  at (1.6,0) {$p$};
            \node[headv] (a)  at (3.0,0) {$a$};
            \draw[thick] (ui)--node[below] {$\cdots$} (p)--(a);
            \node[rej] (x) at (0.65,1.15) {$x$};
            \node[sel] (b) at (4.30,1.15) {$b$};
            \draw[thick] (x)--(ui) (x)--(a) (b)--(a);
            \draw[densely dashed,draw=black!55] (x)--(b);
            \fill[white] (2.475,1.15) circle (0.16);
            \draw[black!70,line width=0.8pt]
            (2.375,1.05)--(2.575,1.25) (2.375,1.25)--(2.575,1.05);
            \node[anchor=north,align=center] at (1.95,-0.62)
            {$x\in R(P,F),\ b\in U(P,F)$\\$\Longrightarrow\ \{x,b\}\notin E$};
        \end{scope}
        \draw[black!20,line width=0.5pt] (5.00,-0.45)--(5.00,2.25);
        \begin{scope}[xshift=6.05cm]
            \node[ptitle] at (2.15,2.05) {(b) Backtracking uniformity};
            \draw[pathbox] (-0.35,-0.36) rectangle (3.55,0.36);
            \node[anchor=east] at (-0.43,0) {$P$};
            \node[pathv] (u)  at (0,0) {$u_i$};
            \node[pathv] (rp) at (1.45,0) {$p$};
            \node[headv] (ra) at (3.15,0) {$a$};
            \draw[thick] (u)--node[below] {$\cdots$} (rp)--(ra);
            \draw[setbox] (0.72,0.82) rectangle (3.88,1.68);
            \node[fill=white,inner sep=1pt] at (2.30,1.68) {$R(P,F)$};
            \node[rej] (rx) at (1.25,1.22) {$x$};
            \node at (2.30,1.22) {$\cdots$};
            \node[rej] (rz) at (3.35,1.22) {$z$};
            \draw[thick] (rx)--(rp) (rx)--(ra) (rz)--(rp) (rz)--(ra);
            \node[anchor=north,align=center] at (2.05,-0.62)
            {$R(P,F)\cap N(p)\ne\emptyset$\\$\Longrightarrow\ N(y)\cap V(P)=\{p,a\}\ \ (\forall y\in R(P,F))$};
        \end{scope}
    \end{tikzpicture}
    \caption{The two conditions that prevent cleanup from moving to the wrong endpoint.
        Blue, orange, and teal denote $\cpath$, $\cheadd$, and $\cneigh$. The crossed dashed edge in (a) is forbidden.}
    \label{fig:uniform-separation}
\end{figure}

\subsection{Five-color simulation}

A configuration $\gamma=(v,\psi)$ is \emph{consistent} if the vertices colored neither $\cinit$ nor $\cfin$ form an induced path $P(\gamma)=(u_1,\dots,u_k)$ with $u_1=s$, every $u_i$ with $i<k$ is colored $\cpath$, and $u_k=v$ is colored $\cheadd$.
This path is unique.
With $F(\gamma)=\{v\in V:\psi(v)=\cfin\}$, the configuration represents the semi-DFS state $(P(\gamma),F(\gamma))$.

\begin{jlemma}
    \label{lem:uniform-simulation}
    On a graph with the uniform separation property, from every consistent configuration representing a reachable semi-DFS state $(P,F)$, Algorithm~\ref{alg:uniform-five} reaches one of the following outcomes in finite time.
    \begin{enumerate}[(i)]
        \item If $U(P,F)\ne\emptyset$, it reaches a consistent configuration representing an expansion $P\mapsto P\circ b$ for some $b\in U(P,F)$.
        \item If $U(P,F)=\emptyset$ and $|P|\ge2$, it reaches the consistent configuration representing the corresponding backtracking step.
        \item If $P=(s)$ and $U(P,F)=\emptyset$, it stops at $s$.
    \end{enumerate}
\end{jlemma}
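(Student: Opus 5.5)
Starting from a consistent configuration at the head $a=u_k$, which is colored $\cheadd$ and has predecessor $p=u_{k-1}$ colored $\cpath$ when $k\ge2$, I will trace the rule applications phase by phase. At each step I check that the rule fired is determined by the invariants and that every adversarial choice among same-colored neighbors leads to the same abstract outcome. The execution splits into a \emph{scan phase} and a \emph{completion phase}. During the scan, the agent repeatedly requests $\cinit$ from $a$ (\rlabel{U10} or \rlabel{U12} when $k=1$). At an $\cinit$ neighbor $x$ it sees whether some $\cpath$ vertex other than $p$ is adjacent. More precisely, $x$ sees $\cheadd$ together with either $\cpath$ or no $\cpath$, and the relevant distinction is encoded by which of \rlabel{U3} and \rlabel{U4} applies. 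A rejected candidate $x\in R(P,F)$ has a $\cpath$ neighbor, so \rlabel{U3} marks it $\cneigh$ and returns to $a$. An eligible $b\in U(P,F)$ has no $\cpath$ neighbor, since $P$ is induced (Lemma~\ref{lem:semidfs}) and $b$ avoids $u_1,\dots,u_{k-1}$; so \rlabel{U4} colors it $\cheadd$ and returns. The scan invariant is that the marked set is a subset of $R(P,F)$ and every other vertex keeps its representing color. Each scan step removes one $\cinit$ neighbor of $a$, so the scan terminates after at most $\deg(a)$ steps. I will also check that finished and path vertices are never requested here.

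\emph{Expansion cleanup (outcome (i)).} Once some $b$ is colored $\cheadd$, the agent at $a$ sees $\cheadd$ and applies \rlabel{U7}/\rlabel{U8}. I will show that \rlabel{U7} visits each $\cneigh$ vertex $x$, which \rlabel{U19} resets to $\cinit$ before requesting $\cheadd$. Expansion separation guarantees $\{x,b\}\notin E$, so the only $\cheadd$ neighbor of $x$ is $a$, and the agent returns there. When no marks remain, \rlabel{U8} recolors $a$ to $\cpath$ and moves to the unique $\cheadd$ neighbor $b$. The result is consistent and represents $P\circ b$. One subtlety needs checking: a second eligible vertex might also be colored $\cheadd$ before \rlabel{U7} fires. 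I expect the rule ordering (\rlabel{U7} and \rlabel{U8} take precedence over \rlabel{U10}) to prevent this, because after \rlabel{U4} returns, $a$ immediately sees $\cheadd$.

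\emph{Backtracking (outcomes (ii),(iii)).} If the scan ends with no $\cinit$ neighbor, then $U(P,F)=\emptyset$ and every remaining noncompleted neighbor of $a$ outside $P$ is $\cneigh$. For $k=1$ there are no marks, since no rejected set exists, and \rlabel{U13} stops at $s$. For $k\ge2$ I split by the dichotomy in backtracking uniformity.

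If $R(P,F)\cap N(p)=\emptyset$, the agent at $a$ uses \rlabel{U7}-style visits through \rlabel{U18}/\rlabel{U19}. I will verify that a $\cneigh$ vertex $x$ not adjacent to $p$ sees $\cheadd$ and therefore returns to $a$ by \rlabel{U19}. After the marks are gone, \rlabel{U11} and then the $\cpath$ rule \rlabel{U16}/\rlabel{U17} complete $a$ and make $p$ the new head.

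If every $x\in R(P,F)$ is adjacent to $p$ with $N(x)\cap V(P)=\{p,a\}$, the new rules take over. From $a$ the agent goes to $p$. There \rlabel{U14} sets $p$ to $\cinit$ and returns to $a$. At $a$, \rlabel{U6} sets $a$ to $\cfin$ and requests $\cinit$, which reaches $p$ uniquely because $p$ is now the only $\cinit$ neighbor of $a$. At $p$, \rlabel{U2} turns it back to $\cpath$. Then \rlabel{U15} visits each $\cneigh$ vertex, and \rlabel{U18} resets it to $\cinit$ and returns to the unique $\cpath$ neighbor $p$; uniqueness holds because $N(x)\cap V(P)=\{p,a\}$ and $a$ is now $\cfin$. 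Finally \rlabel{U16} recolors $p$ to $\cheadd$.

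The main obstacle is this last case. I must check observation by observation that none of the earlier-priority rules (\rlabel{U1}, \rlabel{U3}--\rlabel{U5} at $p$ while it is $\cinit$, and \rlabel{U14} versus \rlabel{U15}) fires unexpectedly, and that $p$'s other neighbors, such as $u_{k-2}$, do not disturb the observations. To do this I would tabulate the observation multiset at $a$, $p$ and each $x$ for every intermediate coloring, using inducedness of $P$ to pin down which colors can appear.
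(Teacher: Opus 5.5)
Your overall structure is the paper's: a scan by \rlabels{U10, U3, U4}, expansion cleanup by \rlabels{U7, U19, U8} justified by expansion separation, and a backtracking split along the dichotomy of backtracking uniformity. Your trace of the clique-type pattern (\rlabel{U11}, \rlabel{U14}, \rlabel{U6}, \rlabel{U2}, repeated \rlabels{U15 and U18}, \rlabel{U16}) is exactly the paper's. But this lemma is nothing more than a rule trace, and two of your traces fail as written.

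\textbf{The start case $P=(s)$.} You fold the expansion into the scan, but \rlabel{U12} recolors $s$ from $\cheadd$ to $\cpath$ before moving. The candidate then sees $\cpath$ and no $\cheadd$, so neither \rlabel{U3} nor \rlabel{U4} applies. Your claim that an eligible candidate has no $\cpath$ neighbor is also false here. What actually happens is that \rlabel{U5} recolors the candidate $\cheadd$ and stays, which directly gives the consistent configuration of $(s,b)$, with no marks and no cleanup. Relatedly, rejection means adjacency to any of $u_1,\dots,u_{k-1}$, including $p$, not to ``some $\cpath$ vertex other than $p$.'' In the clique-type pattern, $p$ is the only $\cpath$ neighbor of every mark.

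\textbf{The backtracking pattern $R(P,F)\cap N(p)=\emptyset$.} Here the agent at $a$ cannot start with \rlabel{U7}, because its guard requires a $\cheadd$ neighbor and after the scan $a$ has none. So \rlabel{U11} fires first and moves to $p$. At $p$, \rlabel{U16} is blocked because $a$ is still $\cheadd$, and \rlabel{U17} recolors $p$ with $\cheadd$ and returns to $a$. Only then do repeated \rlabels{U7 and U19} restore the marks; \rlabel{U18} never fires, since each mark sees $a$ in $\cheadd$. After that, \rlabel{U9}, which is absent from your proposal, completes $a$ and moves to $p$. In your ordering nothing ever recolors $a$ with $\cfin$.

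More importantly, your ordering hides why the first alternative of backtracking uniformity is needed. During this cleanup both $a$ and $p$ are $\cheadd$, so \rlabel{U19} at a mark adjacent to $p$ could send the agent to $p$ instead of $a$. Non-adjacency of the marks to $p$ is exactly what makes $a$ the unique $\cheadd$ neighbor of each mark. In your sequence $p$ would still be $\cpath$ during cleanup, and that hypothesis would play no role.
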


\begin{proof}
    \emph{Start.} For $P=(s)$, \rlabel{U12} followed by \rlabel{U5} makes an $\cinit$-colored neighbor the new head, and \rlabel{U13} stops if none exists.

    \emph{Scan.} For $|P|\ge2$, \rlabel{U10} visits the $\cinit$-colored neighbors of the head $a$ one at a time, and each visit returns through the unique $\cheadd$ vertex $a$.
    \rlabel{U3} marks a candidate $\cneigh$ when the agent sees $\cpath$ there, and otherwise \rlabel{U4} colors it $\cheadd$.
    The scan ends either with a new head $b$ and the scanned rejected candidates marked, or with every $\cinit$-colored neighbor of $a$ marked.

    \emph{Expansion cleanup.} With $a$ and $b$ both $\cheadd$, \rlabel{U7} moves to a mark and \rlabel{U19} restores it and returns to $\cheadd$, which is $a$ alone because expansion separation keeps every rejected candidate away from $b$.
    After the last mark, \rlabel{U8} recolors $a$ with $\cpath$ and moves to $b$, which gives the consistent configuration of $P\circ b$.

    \emph{Backtracking.} If no candidate is eligible, \rlabel{U11} moves to the predecessor $p$, and backtracking uniformity leaves two patterns.
    If $p$ is adjacent to no mark, \rlabel{U17} makes $p$ a head and returns to $a$, repetitions of \rlabels{U7 and U19} restore the marks through $a$, and \rlabel{U9} finishes $a$ and moves to $p$.
    If $p$ is adjacent to one mark, it is adjacent to all of them, so \rlabel{U14} temporarily recolors $p$ with $\cinit$, \rlabel{U6} finishes $a$ and moves to $p$ as the unique $\cinit$ neighbor of $a$, and \rlabel{U2} recolors $p$ with $\cpath$.
    Repetitions of \rlabels{U15 and U18} then restore the marks through $p$, which is their unique $\cpath$ neighbor by uniformity, and \rlabel{U16} makes $p$ the head.
    Both patterns give the consistent configuration of $P$ without $a$.
    Each scan or cleanup iteration removes one remaining candidate, so every phase is finite.
\end{proof}

\begin{jtheorem}
    \label{thm:uniform}
    Every graph with the uniform separation property can be explored by Algorithm~\ref{alg:uniform-five} using five colors and $O(n\Delta)$ moves.
\end{jtheorem}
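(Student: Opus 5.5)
The plan is to compose Lemma~\ref{lem:uniform-simulation} with Lemma~\ref{lem:semidfs}. The initial configuration, with every vertex $\cinit$ and the agent at $s$, is not itself consistent, since $s$ is not colored $\cheadd$. We therefore treat the start separately, following the \emph{Start} step of the simulation proof. If $s$ has no neighbor, its observation is $(\cinit;\emptyset)$, which falls to \rlabel{U5}; this recolors $s$ to $\cheadd$, stays, and then \rlabel{U13} stops. If $s$ has a neighbor, \rlabel{U1} colors $s$ with $\cpath$ and enters an $\cinit$ neighbor $b$. At $b$, the agent sees no $\cheadd$ and no $\cneigh$, so \rlabel{U5} recolors $b$ with $\cheadd$. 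This configuration represents the expansion of $((s),\emptyset)$ to $((s,b),\emptyset)$, which is a reachable semi-DFS state. The rule order also matters here: \rlabel{U2} must not fire, because no $\cneigh$ is present.

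From the first consistent configuration we proceed by induction on the number of semi-DFS iterations. Each consistent configuration represents a reachable state $(P,F)$. By Lemma~\ref{lem:uniform-simulation}, the agent reaches, in finite time and for every adversarial choice, either a consistent configuration representing a legal semi-DFS successor (which is again reachable, because candidate choices in reachability are arbitrary) or a stop at $s$ exactly when $P=(s)$ and $U(P,F)=\emptyset$. The uniform separation property is assumed for every reachable state, so the lemma applies again at each step. The sequence of represented states is therefore a semi-DFS run. By Lemma~\ref{lem:semidfs}, it terminates within $2n$ iterations, having visited every vertex, and the agent stops at $s$. Visiting is genuine because every vertex added to $P$ is recolored $\cheadd$ while the agent is on it. No absent-color request or infinite loop occurs, since the lemma guarantees finite progress to the next outcome.

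For the move bound, we count moves per semi-DFS iteration. A scan at head $a$ visits each $\cinit$ neighbor of $a$ with $O(1)$ moves (\rlabel{U10} plus \rlabel{U3}/\rlabel{U4} and the return), so it costs $O(\deg a)$. Each cleanup costs $O(1)$ moves per mark, and there are at most $\deg a$ marks. The backtracking extras (\rlabels{U11}, \rlabel{U14}, \rlabel{U6}, \rlabel{U2}, \rlabel{U16}/\rlabel{U17}, \rlabel{U9}) are $O(1)$. Hence each iteration costs $O(\Delta)$ rule applications, including the $\cstay$ steps of \rlabel{U2}, \rlabel{U5}, and \rlabel{U16}. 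With at most $2n$ iterations, the total is $O(n\Delta)$.

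The main obstacle is checking that the per-iteration cost really is bounded by $O(\deg a)$ and not by repeated rescans. In particular, a scan ends at the first non-rejected candidate, and marks are cleared before the next iteration. The finite-time argument of Lemma~\ref{lem:uniform-simulation} states only that each phase removes one candidate per step; I would make this quantitative by observing that every scan and cleanup step changes the color of a distinct neighbor of $a$ (or of $p$, which lies in $N(a)$). This gives at most $2\deg(a)+O(1)$ such steps per iteration.
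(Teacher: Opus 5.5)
Your proposal is correct and follows essentially the same route as the paper. You handle the start by \rlabel{U1} then \rlabel{U5} (or \rlabel{U5} then \rlabel{U13} at an isolated start), chain Lemma~\ref{lem:uniform-simulation} with Lemma~\ref{lem:semidfs}, and bound each of the at most $2n$ semi-DFS steps by $O(\Delta)$ moves via constant work per scanned or restored candidate. Your extra checks make explicit what the paper's short proof leaves implicit: rule precedence at the first head, reachability of each represented state, and charging each scan or cleanup step to a distinct neighbor of the head.
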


\begin{proof}
    At a nonisolated start, \rlabel{U1} followed by \rlabel{U5} creates the first consistent configuration.
    At an isolated start, \rlabel{U5} followed by \rlabel{U13} stops there.
    The simulation established by Lemma~\ref{lem:uniform-simulation} follows semi-DFS step by step.
    Lemma~\ref{lem:semidfs} therefore gives finite exploration of all vertices and termination at $s$.
    There are at most $2n$ semi-DFS steps.
    Each examines at most $\Delta$ candidates and restores their colors with constant work per candidate, for $O(n\Delta)$ moves.
\end{proof}

\subsection{Structural characterization}

\begin{jlemma}
    \label{lem:ktf-uniform}
    Expansion separation holds in every reachable semi-DFS state from every start if and only if the graph belongs to $\gktf$.
    Every graph in $\gktf$ also satisfies backtracking uniformity.
    Consequently, the uniform separation property holds exactly for graphs in $\gktf$.
\end{jlemma}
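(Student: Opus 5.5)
The proof has three parts: the forward direction of expansion separation, its converse, and backtracking uniformity on $\gktf$. All three rest on one observation about a reachable state $(P,F)$ with $k\ge2$, head $a=u_k$ and predecessor $p=u_{k-1}$. Take a rejected candidate $x\in R(P,F)$ adjacent to some $u_i$ with $i<k$. The path $u_i,\dots,u_k$ is induced by Lemma~\ref{lem:semidfs}, so $u_i,\dots,u_k,x$ is a cycle. Hence the edge $\{a,x\}$ lies in the same block as the edges of $u_i\dots u_k$, in particular as $\{p,a\}$. Likewise, for $b\in U(P,F)$ adjacent to $x$, the cycle $u_i\dots a\,b\,x$ puts $\{a,b\}$ in that block as well.

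\emph{Expansion separation on $\gktf$.} Suppose $b\in U(P,F)$ and $x\in R(P,F)$ with $\{b,x\}\in E$. Then $a,b,x$ form a triangle, so their common block $B$ is a clique. By the observation, $B$ also contains $u_i,\dots,u_k$. If $i<k-1$, the clique $B$ contains the chord $\{u_i,a\}$, contradicting inducedness; so $i=k-1$ and $p\in B$. Then $b\in B$ is adjacent to $p$, contradicting $b\in U(P,F)$.

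\emph{Backtracking uniformity on $\gktf$.} Assume $R(P,F)\cap N(p)\ne\emptyset$, witnessed by $y$. Then $p,a,y$ is a triangle, so the block $B$ containing $\{p,a\}$ is a clique. Now take any $x\in R(P,F)$ with $N(x)\cap\{u_1,\dots,u_{k-1}\}\ni u_i$. The observation puts $x$ and $u_i,\dots,u_k$ in the same block, which is $B$. Since $B$ is a clique and the path is induced, only $p$ and $a$ of the path can lie in $B$. So $u_i=p$, $\{x,p\}\in E$, and $x$ is adjacent to no other path vertex: any such neighbor $u_j$ would likewise be forced into $B$ and hence equal $p$. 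This gives $N(x)\cap V(P)=\{p,a\}$. Note that $U(P,F)=\emptyset$ is not even needed here.

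\emph{Converse (the main obstacle).} Suppose $G\notin\gktf$. Then some block $B$ contains a triangle $T=\{a,b,x\}$ but is not a clique, so there are nonadjacent $u,w\in B$. I would extract a 2-connected structure in which a triangle edge sees a vertex lying on an induced path through the block. Concretely, I would aim for a chordless cycle $C$ of length at least $4$ in $B$, or a triangle $abx$ together with a vertex $y$ in $B$ that is adjacent to $x$ but not to $a$ or $b$, joined by a path avoiding $N(a)\setminus\{x\}$. Then I would choose the start $s=y$ (or a vertex on the far side) and candidate choices so that semi-DFS builds an induced path ending at $a$ with $x$ adjacent to an earlier path vertex and $b$ still eligible.

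The delicate part is steering semi-DFS: its candidate choices are ours, but inducedness and the finished set constrain which paths are reachable. I would first try taking a shortest path in $B-\{b\}$ from $x$'s other neighbor to $a$, with $s$ placed outside $N[b]$. Such a path is induced and avoids $N(b)$ except at $a$, and keeps $F=\emptyset$; then $b\in U$, $x\in R$, and $\{b,x\}\in E$ witnesses the failure. Handling the case where every such path meets $N(b)$ early will require choosing the triangle minimally, for example a triangle and a non-neighbor at minimum distance. I expect most of the casework here.

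The final sentence of the lemma then follows by combining the three parts.
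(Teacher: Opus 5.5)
Your proofs that every graph in $\gktf$ satisfies expansion separation and backtracking uniformity are correct and match the paper's argument that every cycle lies in one block. In the first one, the case split on $i$ is unnecessary: once $u_i$ and $b$ lie in the same clique block, $\{b,u_i\}\in E$ already contradicts $b\in U(P,F)$. The genuine gap is the converse, which you leave as a plan, and whose one concrete step is unjustified. A shortest path in $B-\{b\}$ from a neighbor of $x$ to $a$ is induced, but nothing forces it to avoid $N(b)$, or even $x$. Moreover, for an arbitrary triangle with arbitrary roles no witness need exist. In the diamond $K_4-e$, let $b$ be a vertex of degree three. Then $b$ is adjacent to every other vertex, hence to $u_1$ in every state with $k\ge2$, so $b$ is never eligible and that role assignment can never give a witness. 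So the whole difficulty is choosing the triangle and the roles of $a$, $b$, $x$, and ``choose the triangle minimally'' does not yet say how, or why it would work.

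The missing idea is an extremal choice that makes the witness automatic. Take an inclusion-maximal clique $K\subsetneq V(B)$ containing a triangle. By 2-connectivity, some component $C$ of $B-K$ has two distinct neighbors in $K$. Take a shortest path $Q=a,q_1,\dots,q_t,z$ with all internal vertices in $C$ and distinct endpoints $a,z\in K$. Minimality makes $Q-z$ induced. If $t\ge2$, minimality also shows that no $b\in K\setminus\{a,z\}$ is adjacent to any $q_i$, since otherwise $a,q_1,\dots,q_i,b$ or $b,q_t,z$ would be shorter. If $t=1$, maximality of $K$ supplies some $b\in K\setminus\{a,z\}$ not adjacent to $q_1$. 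Start semi-DFS at $q_t$ and follow $Q-z$ back to $a$. This gives a reachable state with $F=\emptyset$ in which $z$ is rejected (it is adjacent to $u_1=q_t$), $b$ is eligible, and $\{b,z\}\in E$. Without this construction, or a completed version of your own extremal argument, the ``only if'' direction is unproved, and so is the final characterization.
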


\begin{proof}
    We first suppose that $G\in\gktf$ and that expansion separation fails for an eligible candidate $b$ and a rejected candidate $x$.
    Since both are adjacent to the head $a$ and $\{b,x\}\in E$, the vertices $a,b,x$ form a triangle.
    Because $x$ is rejected, it has a neighbor $u_i$ earlier on $P$.
    The edge $\{x,u_i\}$, the subpath from $u_i$ to $a$, and $\{a,x\}$ form a cycle sharing the edge $\{a,x\}$ with that triangle.
    The triangle and cycle therefore lie in one block, which must be a clique because $G\in\gktf$.
    This forces $\{b,u_i\}\in E$, contradicting the eligibility of $b$.

    Next, we assume $x\in R(P,F)\cap N(p)$.
    The triangle $x,p,a$ lies in a block containing $\{p,a\}$.
    Because $G\in\gktf$, this block is a clique.
    For every rejected candidate $y$, its edge to an earlier path vertex together with the subpath to $a$ and the edge $\{a,y\}$ forms a cycle using $\{p,a\}$.
    Hence this cycle lies in the same clique block, which forces $\{y,p\}\in E$.
    If $y$ were also adjacent to some $u_j$ with $j\le k-2$, the same clique block would force $\{u_j,a\}\in E$, contradicting the fact that $P$ is induced.
    Thus backtracking uniformity follows.

    For the converse, we assume that expansion separation always holds and that a block $B$ contains a triangle but is not a clique.
    We will construct a reachable state with adjacent vertices $b\in U(P,F)$ and $z\in R(P,F)$, contradicting expansion separation.
    We choose an inclusion-maximal clique $K\subsetneq V(B)$ that contains a triangle.
    By biconnectivity, some component $C$ of $B-K$ has two distinct neighbors in $K$.
    We choose a shortest path $Q=a,q_1,\dots,q_t,z$ with $t\ge1$, all internal vertices in $C$, and distinct endpoints in $K$.
    Then $Q-z$ is induced.
    Indeed, any chord of $Q-z$ would give a shorter path with two endpoints in $K$ and all internal vertices in $C$.
    We claim that some $b\in K\setminus\{a,z\}$ is adjacent to no $q_i$.
    If $t\ge2$, any $b\in K\setminus\{a,z\}$ suffices, and such a vertex exists because $K$ contains a triangle.
    An edge $\{b,q_i\}$ would contradict the minimality of $Q$.
    For $i<t$, the path $a,q_1,\dots,q_i,b$ is shorter than $Q$.
    For $i=t$, the path $b,q_t,z$ is shorter than $Q$.
    If $t=1$, maximality of $K$ gives a vertex $b\in K\setminus\{a,z\}$ nonadjacent to $q_1$, since otherwise $K\cup\{q_1\}$ would be a larger clique.
    We start semi-DFS at $q_t$, and the adversary follows the vertices of $Q-z$ in reverse order until it reaches $a$.
    These choices are eligible because $Q-z$ is induced.
    The resulting path is the reverse of $Q-z$ and has $F=\emptyset$.
    The vertex $z$ is rejected because it is adjacent to the head $a$ and the earlier path vertex $q_t$.
    The vertex $b$ is eligible because it is adjacent to $a$ and to none of the $q_i$.
    Finally, $\{b,z\}\in E$ because both vertices lie in the clique $K$, contradicting expansion separation.
\end{proof}

\begin{jtheorem}
    \label{thm:main}
    Every graph in $\gktf$ can be explored by Algorithm~\ref{alg:uniform-five} using five colors and $O(n\Delta)$ moves.
\end{jtheorem}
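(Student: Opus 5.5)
This follows by combining Lemma~\ref{lem:ktf-uniform} with Theorem~\ref{thm:uniform}. The plan has three steps.

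\textbf{Step 1.} Fix $G\in\gktf$. Invoke Lemma~\ref{lem:ktf-uniform} twice:
\begin{enumerate}[(a)]
    \item its forward direction gives expansion separation in every reachable semi-DFS state from every start;
    \item its second assertion gives backtracking uniformity in every such state.
\end{enumerate}
Together these are exactly the uniform separation property for $G$.

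\textbf{Step 2.} Apply Theorem~\ref{thm:uniform} to $G$. It states that Algorithm~\ref{alg:uniform-five} explores every graph with the uniform separation property, using the five colors of $\cus$ and $O(n\Delta)$ moves.

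\textbf{Step 3.} Check that nothing in the hypotheses is lost in the chain. Lemma~\ref{lem:ktf-uniform} quantifies over every start $s$ and every reachable state, which matches the definition used by Lemma~\ref{lem:uniform-simulation}. There the adversary's choices produce exactly the semi-DFS candidate choices. Isolated starts, and the trivial case where $G$ has a single vertex, are handled directly inside the proof of Theorem~\ref{thm:uniform}.

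The only point needing care is this agreement in quantifiers, namely that ``reachable'' means the same thing in both results. Since both use the same notion of reachable semi-DFS state, I expect no real obstacle.
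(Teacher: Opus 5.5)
Your proposal is correct and matches the paper's proof, which is the same one-line combination: Lemma~\ref{lem:ktf-uniform} gives the uniform separation property for every graph in $\gktf$, and Theorem~\ref{thm:uniform} then applies. Your Step~3 consistency check is harmless and needs no further argument, since both results use the same definition of a reachable semi-DFS state.
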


\begin{proof}
    Lemma~\ref{lem:ktf-uniform} gives the uniform separation property, so Theorem~\ref{thm:uniform} applies.
\end{proof}

The diamond $K_4-e$ is $\varphi$-free but not in $\gktf$, whereas $K_5$ is in $\gktf$ but not $\varphi$-free, so the two classes are incomparable.

A graph is \emph{unichord-free} if no cycle has exactly one chord~\cite{trotignon2010structure}.
A structure theorem of Trotignon and Vu\v{s}kovi\'c gives a further consequence for this class.

\begin{jcorollary}
    \label{cor:unichord}
    Every unichord-free graph can be explored with five colors and $O(n\Delta)$ moves.
\end{jcorollary}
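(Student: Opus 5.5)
The plan is to reduce the corollary to Theorem~\ref{thm:main}. That theorem already gives five colors and $O(n\Delta)$ moves for every graph in $\gktf$, so it suffices to show that every unichord-free graph belongs to $\gktf$. In other words, we must show that every block of a unichord-free graph is either a clique or triangle-free. Exploration is defined only for connected graphs, and unichord-freeness is inherited by subgraphs. We therefore fix a connected unichord-free graph $G$ and a block $B$ of $G$. Since $B$ is a subgraph of $G$, it is also unichord-free, because a cycle of $B$ with exactly one chord in $B$ has exactly one chord in $G$ (both endpoints of any chord already lie on the cycle, hence in $B$).

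The key step is to invoke the structure theorem of Trotignon and Vu\v{s}kovi\'c~\cite{trotignon2010structure} for unichord-free graphs. As I recall it, every biconnected (in fact, every) unichord-free graph is either basic or has a decomposition (a $1$-cutset, a proper $2$-cutset, or a $1$-join). Its triangle-free consequence is that a unichord-free graph with no clique cutset, or more specifically a biconnected one, containing a triangle is a clique; equivalently, every $2$-connected unichord-free graph is a clique or triangle-free. I would cite this consequence directly, applied to $B$. Bridges are $K_2$, which is both a clique and triangle-free. A biconnected block $B$ is then a clique or triangle-free, so $G\in\gktf$.

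The main obstacle is matching the exact form of the cited structure theorem to the biconnected statement needed. If the theorem is phrased only for graphs without clique cutsets, I would fall back on a short direct argument instead. Suppose a biconnected unichord-free $B$ contains a triangle but is not a clique. I would reuse the construction from the converse part of Lemma~\ref{lem:ktf-uniform}: take a maximal clique $K$ containing a triangle, a component $C$ of $B-K$ with two neighbors in $K$, and a shortest path $Q=a,q_1,\dots,q_t,z$ through $C$. From these, I would exhibit a cycle with exactly one chord. For $t\ge2$, one candidate is the cycle $Q$ closed by a vertex $b\in K\setminus\{a,z\}$ (cycle $a,q_1,\dots,q_t,z,b,a$), whose unique chord is $\{a,z\}$. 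For $t=1$, one candidate is the $4$-cycle $a,q_1,z,b$ with $b$ nonadjacent to $q_1$, whose only chord is again $\{a,z\}$. Checking that no other chords exist uses the minimality of $Q$, exactly as in Lemma~\ref{lem:ktf-uniform}. This contradiction gives $G\in\gktf$, and Theorem~\ref{thm:main} completes the proof.
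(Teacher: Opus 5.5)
Your main route is the paper's: each block is unichord-free, the Trotignon--Vu\v{s}kovi\'c theorem forces every block that contains a triangle to be a clique, so $G\in\gktf$ and Theorem~\ref{thm:main} applies. Your fallback is also sound, since the subgraph induced on $V(Q)\cup\{b\}$ is the cycle $a,q_1,\dots,q_t,z,b$ with the single chord $\{a,z\}$, and it would make the corollary independent of the citation.

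One correction: unichord-freeness passes only to induced subgraphs. For example, $K_4$ is unichord-free but its subgraph $K_4-e$ is not. Your parenthetical therefore needs the fact, stated in the paper's proof, that a block is an induced subgraph; this is what guarantees that a chord in $G$ of a cycle of $B$ is an edge of $B$.
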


\begin{proof}
    Every block is an induced subgraph and hence unichord-free, and it has no cut vertex.
    By \cite[Theorem~2.2]{trotignon2010structure}, a connected unichord-free graph that contains a triangle is a clique or has a cut vertex in the maximal clique containing that triangle.
    Hence every block containing a triangle is a clique, so the graph belongs to $\gktf$ and Theorem~\ref{thm:main} applies.
\end{proof}

\section{A Four-Color Lower Bound for Subcubic Pseudotrees}
\label{sec:pseudotree-lower}

In this section, we use the obstruction family $\mathcal H$ of nine graphs in Figure~\ref{fig:cactus-lower-witness} to show that no single rule with at most three colors explores every subcubic pseudotree.

By adding unused colors if necessary and renaming colors, we may assume w.l.o.g.~that $C=\{0,1,2\}$, with $0$ as the initial color.
By Lemma~\ref{lem:stay-elimination}, candidate successful rules may be restricted to those that never stay at a reachable observation.
Initially, the current vertex and all its neighbors have color $0$, so the observation depends only on the degree of the starting vertex.
At a start of degree $d\in\{1,2,3\}$, stopping would leave vertices unvisited, and color $0$ is the only neighbor color that can be requested.
Thus the first action writes a color $r_d\in C$ determined by $d$ and moves to a neighbor of color $0$.
We call the triple $(r_1,r_2,r_3)$ the \emph{initial tuple} and abbreviate it as $\mathtt{r_1r_2r_3}$.

\begin{figure}[htbp]
    \centering
    \begin{tikzpicture}[
            x=1.2cm,y=1.2cm,
            hv/.style={circle,fill=black,inner sep=1.6pt},
            he/.style={line width=0.6pt},
            hname/.style={font=\small,anchor=north}
        ]

        \begin{scope}[shift={(2.70,1.60)}]
            \node[hv](a)at(-0.42,-0.24){}; \node[hv](b)at(0.42,-0.24){};
            \node[hv](c)at(0,0.28){};
            \draw[he](a)--(b)--(c)--(a);
            \node[hname]at(0,-0.58){$C_3$};
        \end{scope}

        \begin{scope}[shift={(5.20,1.60)}]
            \node[hv](a)at(0,-0.05){}; \node[hv](b)at(-0.48,-0.32){};
            \node[hv](c)at(0.48,-0.32){}; \node[hv](d)at(0,0.40){};
            \draw[he](a)--(b) (a)--(c) (a)--(d);
            \node[hname]at(0,-0.58){$K_{1,3}$};
        \end{scope}
        \begin{scope}[shift={(7.70,1.60)}]
            \node[hv](a)at(-0.36,-0.28){}; \node[hv](b)at(0.36,-0.28){};
            \node[hv](c)at(0.36,0.28){}; \node[hv](d)at(-0.36,0.28){};
            \draw[he](a)--(b)--(c)--(d)--(a);
            \node[hname]at(0,-0.58){$C_4$};
        \end{scope}
        \begin{scope}[shift={(10.20,1.60)}]
            \node[hv](a)at(-0.40,-0.30){}; \node[hv](b)at(0.40,-0.30){};
            \node[hv](c)at(0,0.14){}; \node[hv](d)at(0,0.52){};
            \draw[he](a)--(b)--(c)--(a) (c)--(d);
            \node[hname]at(0,-0.58){$C_3$ + leaf};
        \end{scope}

        \begin{scope}[shift={(1.45,-0.10)}]
            \node[hv](c)at(-0.62,0){}; \node[hv](b)at(-0.22,0){};
            \node[hv](a)at(0.18,0){}; \node[hv](d)at(0.56,0.28){};
            \node[hv](e)at(0.56,-0.28){};
            \draw[he](c)--(b)--(a) (a)--(d) (a)--(e);
            \node[hname]at(0,-0.70){$T_5$};
        \end{scope}
        \begin{scope}[shift={(3.95,-0.10)}]
            \foreach \i/\n in {90/n1,162/n2,234/n3,306/n4,18/n5}{
                    \node[hv](\n)at(\i:0.42){};}
            \draw[he](n1)--(n2)--(n3)--(n4)--(n5)--(n1);
            \node[hname]at(0,-0.70){$C_5$};
        \end{scope}
        \begin{scope}[shift={(6.45,-0.10)}]
            \node[hv](a)at(-0.30,-0.30){}; \node[hv](b)at(0.30,-0.30){};
            \node[hv](c)at(0.30,0.24){}; \node[hv](d)at(-0.30,0.24){};
            \node[hv](e)at(-0.62,0.52){};
            \draw[he](a)--(b)--(c)--(d)--(a) (d)--(e);
            \node[hname]at(0,-0.70){$C_4$ + leaf};
        \end{scope}
        \begin{scope}[shift={(8.95,-0.10)}]
            \node[hv](a)at(-0.60,-0.26){}; \node[hv](b)at(-0.60,0.26){};
            \node[hv](c)at(-0.20,0){}; \node[hv](d)at(0.22,0){};
            \node[hv](e)at(0.62,0){};
            \draw[he](a)--(b)--(c)--(a) (c)--(d)--(e);
            \node[hname]at(0,-0.70){$C_3$ + tail of length 2};
        \end{scope}
        \begin{scope}[shift={(11.45,-0.10)}]
            \node[hv](a)at(0,0.34){}; \node[hv](b)at(-0.34,-0.16){};
            \node[hv](c)at(0.34,-0.16){}; \node[hv](d)at(-0.66,-0.46){};
            \node[hv](e)at(0.66,-0.46){};
            \draw[he](a)--(b)--(c)--(a) (b)--(d) (c)--(e);
            \node[hname]at(0,-0.70){$C_3$ + 2 leaves};
        \end{scope}
    \end{tikzpicture}
    \caption{The obstruction family $\mathcal H$ of nine subcubic pseudotrees on at most five vertices.}
    \label{fig:cactus-lower-witness}
\end{figure}

\begin{jtheorem}
    \label{thm:subcubic-pseudotree-lower}
    No single rule with at most three colors explores every subcubic pseudotree.
\end{jtheorem}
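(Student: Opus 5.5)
We argue by contradiction. Fix a rule $\xi$ with color set $C=\{0,1,2\}$ and initial color $0$, and suppose it explores every graph in the obstruction family $\mathcal H$ of Figure~\ref{fig:cactus-lower-witness}. All nine graphs are subcubic pseudotrees, so this suffices. By Lemma~\ref{lem:stay-elimination} we assume $\xi$ never stays at a reachable observation. As noted before the statement, the first action at a start of degree $d$ is determined by an initial tuple $\mathtt{r_1r_2r_3}\in C^3$, and it writes $r_d$ and moves to a $0$-colored neighbor.

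The first step is a hand reduction of the $27$ initial tuples. If $r_d=0$ for some degree $d$ realized in $\mathcal H$ at a vertex with a neighbor of the same degree, the first move changes no color. The agent then sits in exactly the initial configuration of a start at the destination, and Lemma~\ref{lem:execution-merging} yields a failing execution. Examples are $C_3$, $C_4$ and $C_5$ for $d=2$, and the two leaves of $P_3\subseteq T_5$ at distance two in a suitable sense for $d=1$. For $d=3$ we must be more careful, since $K_{1,3}$ has no two adjacent degree-$3$ vertices. There we try to force a merge by the argument used in Lemma~\ref{lem:p3-two-color}: a start at a leaf and a start at the center reach the same configuration after a few forced moves. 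Color symmetry between $1$ and $2$ lets us fix $r_2=1$ once $r_2\neq0$, which halves the remaining cases. We expect a small number of surviving tuples, say of the form $\mathtt{1}\,\mathtt{1}\,\ast$ and $\mathtt{2}\,\mathtt{1}\,\ast$ up to symmetry. The smaller graphs $C_3$, $K_{1,3}$ and $C_4$ should prune them further, in the style of the $P_3$ case analysis.

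The second step handles each surviving tuple by a finite search, since the remaining rule space is too large by hand. We encode as a SAT instance the unknown values $\xi(c,M)$ for every observation $(c;M)$ with $c\in C$ and $M$ a multiset over $C$ of size at most $3$. This gives $3\cdot 20=60$ observations, each with at most five outputs: a recoloring paired with a requested color or $\cstop$. For each graph $H\in\mathcal H$ and each start, we unroll the configuration space $V(H)\times C^{V(H)}$, which has at most $5\cdot 3^5$ configurations. Reachability is encoded by implications that follow the chosen outputs, with every adversarial destination included. Failure conditions are then forbidden. These are an absent-color request, a stop before visiting all vertices or away from the start, and merging of executions from distinct starts (Lemma~\ref{lem:execution-merging}). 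Infinite runs are ruled out by requiring a ranking function on reachable configurations, or more simply by forbidding any reachable configuration from reaching itself, encoded as an acyclicity constraint. Since $\xi$ is oblivious and deterministic apart from the adversary, the executions are exactly these reachable sets, and a visited-set component must be added to each configuration to check the coverage condition at the stop.

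The main obstacle is making this encoding trustworthy. Three points need care: soundness of the acyclicity encoding, faithful modeling of visited vertices, and correct bookkeeping of which observations are shared across graphs. Sharing is the crux, because a single $\xi$ must work on all nine graphs at once, and indeed no individual graph should be an obstruction on its own. We will therefore produce an UNSAT certificate for each surviving initial tuple and check it with a DRAT checker. Independent scripts will regenerate the graph encodings, and each imported trace and domain clause will be validated against the model definitions. Combining the hand reduction with the certified refutations shows that no rule with three colors explores all of $\mathcal H$, and hence all subcubic pseudotrees. Rules with fewer colors are covered by padding with unused colors.
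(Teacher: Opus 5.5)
Your plan is essentially the paper's proof. You exclude $r_d=0$ by Lemma~\ref{lem:execution-merging}, use the $1\leftrightarrow 2$ symmetry to reach four representatives (fixing $r_2=1$ instead of the paper's $r_3=1$ gives the same four tuples up to swapping), and refute each with a DRAT-checked SAT encoding over $\mathcal H$ with reachability, ranking, return, coverage and merge constraints. Your hand reduction of $r_d=0$, however, rests on a false premise. The destination need not have the same degree: a first move writing $0$ leaves the agent in the all-zero configuration at the destination, which is that vertex's initial configuration. So $K_{1,3}$ immediately excludes $r_1=0$ and $r_3=0$, and $C_3$ excludes $r_2=0$, with no $P_3$-style argument needed; the further hand pruning you anticipate is not needed either.
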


\begin{proof}
    Suppose that a common successful rule exists, and let $\mathtt{r_1r_2r_3}$ be its initial tuple.
    The graphs $C_3$ and $K_{1,3}$ are subcubic pseudotrees containing vertices of degrees $2$ and $1,3$, respectively.
    If $r_d=0$ for some $d\in\{1,2,3\}$, the first move from a degree-$d$ vertex in one of these graphs reaches the initial configuration for its destination.
    Lemma~\ref{lem:execution-merging} then contradicts success, so $r_1,r_2,r_3\ne0$.
    Up to exchanging colors $1$ and $2$, the remaining eight tuples reduce to the four representatives $\itup111,\itup121,\itup211,\itup221$.

    After this reduction, one rule must still explore every graph in the obstruction family $\mathcal H$ of Figure~\ref{fig:cactus-lower-witness}.
    The computer-assisted Lemma~\ref{lem:four-initial-obstruction} in Appendix~\ref{app:cactus-three-certificate} excludes such a common rule for each of the four initial representatives.
    Hence no single rule with at most three colors explores every subcubic pseudotree.
\end{proof}

Consequently, every graph class that contains all subcubic pseudotrees and is contained in $\gcb$ has color complexity exactly four, since Algorithm~\ref{alg:cactus4} explores every graph in $\gcb$.

\section{Conclusion}
\label{sec:conclusion}

We obtain tight three-color bounds for trees and simple cycles and tight four-color bounds for every class between subcubic pseudotrees and $\gcb$, including cacti.
The five-color cleanup conditions characterize $\gktf$.
A four-color lower bound without computer assistance or for a single graph remains open.

\section*{Use of AI tools}
OpenAI Codex (GPT-5.6 Sol/\texttt{xhigh}) assisted with editing and verification and implemented most of the lower-bound code, all of which the authors reviewed.

\bibliography{references}

\appendix

\section{Certificate for the Three-Color Subcubic-Pseudotree Lower Bound}
\label{app:cactus-three-certificate}

This appendix proves the computer-assisted Lemma~\ref{lem:four-initial-obstruction} used in Section~\ref{sec:pseudotree-lower}.
We first specify the graphs and initial tuples and state the lemma.
We then prove that the base encoding represents a common exploration rule and that the additional clauses are sound.
Finally, we describe the checked unsatisfiability certificates, reproduction procedure, and trusted software, and complete the proof of the lemma.

The source code, fixed inputs, and verification scripts for this certificate are available in the repository below.
\begin{center}
    \url{https://github.com/littlegirl0820/color-complexity-recolorable-exploration}
\end{center}

\subsection{Obstruction family and initial representatives}

Throughout this appendix, $C=\{0,1,2\}$ and the initial color is $0$.
At an all-zero observation of degree $d\in\{1,2,3\}$, any successful rule that never stays at a reachable observation must write some $r_d\in C$ and move to color $0$.
We call $\mathtt{r_1r_2r_3}$ its initial tuple.

The family $\mathcal H$ in Figure~\ref{fig:cactus-lower-witness} consists of two trees and seven graphs containing exactly one cycle, all on three to five vertices.
The graph $T_5$ is $K_{1,3}$ with one edge subdivided.

Throughout the encoding, $\tau$ ranges over the four representatives $\itup111,\itup121,\itup211,\itup221$.

\begin{jlemma}[Computer-assisted]
    \label{lem:four-initial-obstruction}
    Let $C=\{0,1,2\}$ be the color set, with initial color $0$.
    Let
    \[
        \tau\in\{\itup111,\itup121,\itup211,\itup221\}.
    \]
    No single deterministic three-color rule with initial tuple $\tau$ that never stays at a reachable observation explores every graph in $\mathcal H$ from every start and against every adversarial choice.
\end{jlemma}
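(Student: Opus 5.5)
We prove the lemma by reducing it, separately for each of the four representatives $\tau$, to the unsatisfiability of a finite propositional formula, and then certifying that unsatisfiability with a DRAT proof.

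\emph{Finiteness.} Only finitely many observations can occur on graphs in $\mathcal H$. The current color lies in $C$, and every vertex has degree at most three, so the neighbor multiset has size at most three over three colors. A rule restricted to these observations is therefore a finite object. Each observation gets one output drawn from $C\times(C\cup\{\cstop\})$, since staying is excluded by Lemma~\ref{lem:stay-elimination}. The all-zero observations of degrees $1,2,3$ are fixed to $(r_d,0)$ by $\tau$. We encode the output at each observation with one-hot Boolean variables.

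\emph{Encoding success.} Fix a graph $H\in\mathcal H$. Its configuration space is finite: at most $5\cdot 3^5$ configurations. For every configuration $q$ we introduce a variable $\mathrm{reach}(q)$, seeded true at the initial configuration of every start $s$. We also tag each configuration with its start, which is legitimate because Lemma~\ref{lem:execution-merging} forbids two starts from sharing a configuration. Each tagged configuration additionally carries the visited set, so that we can check that every vertex is visited and that the stop occurs at the tagged start.

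The clauses are as follows.
\begin{enumerate}[(i)]
    \item If $q$ is reachable and the rule output at its observation is $(c',d)$, then every successor obtained by recoloring to $c'$ and moving to a $d$-colored neighbor is reachable. This covers all adversarial choices.
    \item A requested color $d$ must actually occur among the neighbors, so no absent-color request is allowed.
    \item A stop output at a reachable tagged configuration must occur at the tagged start with all vertices visited.
    \item Two different start tags must never be reachable at the same position-and-coloring pair. This is the merging exclusion.
    \item To rule out infinite executions, each reachable configuration gets a rank bounded by the number of configurations, in a binary or order encoding. Every successor must have strictly smaller rank.
\end{enumerate}
We then prove by hand that a satisfying assignment exists whenever some rule explores all of $\mathcal H$. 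Concretely, we take $\mathrm{reach}$ to be the true reachable set and the rank to be the longest remaining execution length, which is finite because the rule explores $H$ and the configuration graph is acyclic on reachable states. This soundness direction is all that the lemma needs.

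\emph{Auxiliary clauses and certificate.} To keep the solver tractable, we add implied clauses derived from short hand arguments in the style of Lemma~\ref{lem:p3-two-color}. Examples are forced actions at small observations, such as: an observation seen at a degree-one vertex whose only neighbor has just been recolored cannot stop. Each such clause is justified by an explicit failing trace, namely a merging or a loop. An independent checker replays these traces on the concrete graphs, and the clauses are imported only after the replay succeeds. We then run a SAT solver on each of the four formulas and verify the resulting DRAT refutation with a certified checker. A separate script re-derives the graph encodings from edge lists, so that any bug in the encoder is caught.

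\emph{Main obstacle.} The main obstacle is the non-termination constraint. A naive rank encoding, or an encoding over all colorings paired with visited sets, blows up badly and gives the solver little propagation. The first thing we would try is to drop the visited-set component wherever it is implied. We would also encode non-termination only through the merging clauses plus bounded rank on configurations reachable from a single start, and would exploit the $1\leftrightarrow2$ symmetry already used to obtain the four representatives. If that still fails, we would look for small cores of $\mathcal H$ per $\tau$, using fewer than nine graphs where possible, to shrink the formulas.
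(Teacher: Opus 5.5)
Your plan is essentially the paper's route. It uses a one-hot encoding of the rule on the finitely many observations with the initial tuple fixed, and per-graph reachability, rank (termination), return-to-start and visiting constraints. It adds merge clauses justified by Lemma~\ref{lem:execution-merging} and independently checked trace clauses, and refutes each $\tau$ with a DRAT-checked proof, needing only the direction ``successful rule $\Rightarrow$ satisfying assignment''.

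The paper differs in small ways. It tracks unvisited vertices with per-vertex variables $R^{\neg m}_q$ rather than full visited sets, and it adds domain clauses checked by a reachability game. More importantly, in the soundness direction it first redefines the rule at unreachable observations so that the no-stay and availability constraints hold everywhere. You should add that step: a successful rule may still stay or request an absent color at observations that are never reached, and your encoding forbids both at every observation.
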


We prove this lemma at the end of the appendix using the encoding and verification results below.

\subsection{Exact base encoding}
\label{app:cactus-base-encoding}

We fix one representative $\tau$.
The possible observations at nonisolated vertices of graphs of maximum degree three form the set
\[
    \mathcal O=\{(c,M)\in C\times\calM(C):1\le |M|\le3\}.
\]
Since the number of multisets of size $d$ over three colors is $\binom{d+2}{2}$,
\[
    |\mathcal O|=3\sum_{d=1}^3\binom{d+2}{2}=3(3+6+10)=57.
\]
The encoding represents the rule using the ten actions in $\mathsf A=\{a_{\mathrm{stop}}\}\cup C^2$.
An action $(c',d)\in C^2$ recolors the current vertex with $c'$ and moves to a neighbor of color $d$.
For an observation $o=(c,M)$, the available actions are
\[
    \mathsf{Act}(o)=\{a_{\mathrm{stop}}\}\cup\{(c',d)\in C^2:d\in M\}.
\]
A stopping action ends the execution immediately, so its written color is irrelevant.
We therefore represent all three actions $(c',\cstop)$, $c'\in C$, by the single action $a_{\mathrm{stop}}$.

For every $o\in\mathcal O$ and $a\in\mathsf A$, the generator allocates a rule variable $x_{o,a}$ and adds the exactly-one clauses
\[
    \bigvee_{a\in\mathsf A}x_{o,a},
    \qquad
    \neg x_{o,a}\lor\neg x_{o,b}
    \quad(a,b\in\mathsf A,\ a\ne b).
\]
For each action $a\in\mathsf A\setminus\mathsf{Act}(o)$ that requests a color absent from the neighborhood, the generator adds the unit clause $\neg x_{o,a}$.
Three further unit clauses fix the actions at the all-zero observations of degrees 1, 2, and 3 according to $\tau$.
We write $P_{\mathrm{rule}}^\tau$ for these exactly-one, availability, and initial-tuple clauses.

The remaining variables and clauses are generated separately for every $H\in\mathcal H$.
We denote the set of all configurations of $H$ by $Q_H$.
For $q=(v,\psi)\in Q_H$, we write $M_\psi(v)=\{\!\{\psi(u):u\in N(v)\}\!\}$ and $o(q)=(\psi(v),M_\psi(v))$.
For a configuration $q=(v,\psi)$ and a selected action $a$, the set of configurations that the adversary may choose next is denoted by $\delta_H(q,a)$.
If $a$ is not stopping, then $\delta_H(q,a)$ is nonempty by the definition of allowed actions.

The encoding expresses four requirements.
It must include every configuration that the adversary can reach, ensure finite termination, require return to the chosen start, and require every vertex to be visited before stopping.
Variables $R_q$ and ranks $h(q)$ express the first two requirements.
Variables $T_{s,q}$ remember the start for the third.
Variables $R_q^{\neg m}$ track an execution that has not visited $m$ for the fourth.

The encoding introduces a reachability variable $R_q$ for each configuration $q$.
For every start $s$, the symbol $q_s^0$ denotes the all-zero initial configuration, and the encoding contains the unit clause $R_{q_s^0}$.
For every $q'\in\delta_H(q,a)$, it contains
\[
    \neg R_q\lor\neg x_{o(q),a}\lor R_{q'}
\]
to require $q'$ to be reachable whenever $q$ is reachable and the rule selects $a$.
We write $P_{\mathrm{reach}}(H)$ for these clauses.

Each $q$ has a rank $h(q)$ of $\lceil\log_2|Q_H|\rceil$ bits.
For every nonstopping successor $q'\in\delta_H(q,a)$, the encoding contains a standard binary-comparator CNF for
\[
    R_q\land x_{o(q),a}\quad\Longrightarrow\quad h(q')<h(q).
\]
This condition forces the rank to decrease at every reachable transition.
The ranks of configurations marked unreachable may have arbitrary values.
We write $P_{\mathrm{term}}(H)$ for the comparator clauses.

The variables $R_q$ combine executions from every start, so a separate family records where each execution began.
For the start-specific reachability variables $T_{s,q}$, the encoding contains the unit clause $T_{s,q_s^0}$.
For every $q'\in\delta_H(q,a)$, it contains
\[
    \neg T_{s,q}\lor\neg x_{o(q),a}\lor T_{s,q'}
\]
to propagate reachability from start $s$ to every adversarial successor.
With $a_{\mathrm{stop}}$ denoting the stopping action, the encoding also contains
\[
    \neg T_{s,q}\lor\neg x_{o(q),a_{\mathrm{stop}}}
\]
whenever the current position in $q$ is not $s$.
This prevents an execution started at $s$ from stopping elsewhere.
We write $P_{\mathrm{return}}(H)$ for these clauses.

The variable $R^{\neg m}_q$ means that $q$ is reachable along an execution prefix that has not yet visited vertex $m$.
For every start $s$ and vertex $m\ne s$, the encoding contains the unit clause $R^{\neg m}_{q_s^0}$.
It propagates $R^{\neg m}_q$ over a transition $q\to q'$ only when the current position of $q'$ is not $m$.
For every such $q'\in\delta_H(q,a)$, it contains
\[
    \neg R^{\neg m}_q\lor\neg x_{o(q),a}\lor R^{\neg m}_{q'}.
\]
For every $q$ and $m$, it also contains
\[
    \neg R^{\neg m}_q\lor\neg x_{o(q),a_{\mathrm{stop}}}.
\]
Hence an execution leaving any vertex unvisited cannot stop.
We write $P_{\mathrm{visit}}(H)$ for these clauses.

The encoding does not require each configuration marked reachable to have a marked predecessor.
When constructing an assignment from a successful rule, we use exactly the configurations that the rule can reach.
In the other direction, the initial unit clauses and the transition clauses ensure that every configuration actually reached by the rule is marked reachable.

The base encoding is
\[
    B_\tau
    =P_{\mathrm{rule}}^\tau
    \land\bigwedge_{H\in\mathcal H}
    \bigl(
    P_{\mathrm{reach}}(H)
    \land P_{\mathrm{term}}(H)
    \land P_{\mathrm{return}}(H)
    \land P_{\mathrm{visit}}(H)
    \bigr).
\]

\subsection{Exactness of the base encoding}

\begin{jlemma}
    \label{lem:sat-base-exactness}
    The formula $B_\tau$ is satisfiable if and only if there exists a three-color rule with initial tuple $\tau$ satisfying two conditions.
    The rule never stays at a reachable observation.
    It explores every graph in $\mathcal H$ from every start against every adversarial choice.
\end{jlemma}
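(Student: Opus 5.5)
We prove the two directions separately. The satisfying assignment is built from a rule in one direction, and a rule is read off from a satisfying assignment in the other.

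\emph{Rule to assignment.} Let $\xi$ be a three-color rule with initial tuple $\tau$ that never stays at a reachable observation and explores every $H\in\mathcal H$.

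For each observation $o\in\mathcal O$ we set $x_{o,a}$ true for exactly one action $a$:
\begin{itemize}
\item if $\xi$ stops at $o$, we take $a=a_{\mathrm{stop}}$;
\item if $\xi$ moves at $o$, we take $a$ to be its recolor-and-request pair;
\item if $o$ is never reached in any graph of $\mathcal H$, or $\xi$ outputs $\cstay$ or an absent color there, we take $a=a_{\mathrm{stop}}$.
\end{itemize}
At reachable observations $\xi$ never stays and never requests an absent color, since either would contradict success or the hypothesis. So the exactly-one, availability and initial-tuple clauses hold, and the third bullet is harmless.

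The configuration variables are set as follows:
\begin{itemize}
\item $R_q$ is true exactly for configurations of $H$ reachable from some start;
\item $T_{s,q}$ is true exactly for configurations reachable from $s$;
\item $R^{\neg m}_q$ is true exactly for configurations reachable by a prefix from some start $s\ne m$ that has not visited $m$.
\end{itemize}
Every transition clause is then a closure property of reachability under the rule and the adversary, so it holds. The stop clauses hold because every execution succeeds: it stops only at its start and only after visiting every vertex.

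For the ranks, success implies that the reachable transition graph on $Q_H$ is acyclic. A cycle would let the adversary produce an infinite execution. So we set $h(q)$ to the length of the longest reachable path out of $q$. This is finite, at most $|Q_H|-1$, fits in $\lceil\log_2|Q_H|\rceil$ bits, and strictly decreases along every reachable transition. We need the comparator CNF to be exactly equivalent to strict comparison under its guard; this is where the encoding must be checked carefully.

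\emph{Assignment to rule.} Given a satisfying assignment, we define $\xi(o)$ as the unique true action, reading $a_{\mathrm{stop}}$ as $(c,\cstop)$. Observations outside $\mathcal O$ never occur on these graphs, except the isolated-vertex case, which is not in $\mathcal H$, and may be assigned arbitrarily. The initial-tuple clauses fix the first actions to $\tau$, and $\xi$ never stays at all.

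We show by induction along any execution from $s$ on $H$ that every configuration it visits has $R_q$ and $T_{s,q}$ true. In addition, $R^{\neg m}_q$ is true for every $m$ not yet visited. The induction uses the unit clauses for the base case and the transition clauses for the step, and it relies on $\delta_H(q,a)$ containing every adversarial choice. From these invariants we read off that no execution fails:
\begin{itemize}
\item availability clauses exclude absent-color requests;
\item the rank strictly decreases along reachable transitions, so no execution is infinite;
\item the $T$ stop clauses forbid stopping away from $s$;
\item the $R^{\neg m}$ stop clauses forbid stopping while some $m$ is unvisited.
\end{itemize}
Hence every maximal execution succeeds.

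\emph{Main obstacle.} The delicate points lie in the rule-to-assignment direction. One is making sure the choice of actions at unreachable observations, and at reachable observations where the original rule is irrelevant, introduces no violated clause. The other is the rank bound. Since the clauses are conditioned only on $R_q$, which we set to true exactly for reachable $q$, unreachable configurations impose no constraints. The rank bound follows from acyclicity, which holds because a reachable cycle would give the adversary an infinite execution.
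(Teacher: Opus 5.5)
Your proposal is correct and follows the paper's proof. In one direction you read the rule off the exactly-one variables and push the $R$, $T_{s,\cdot}$ and $R^{\neg m}$ invariants along executions. In the other you set every reachability variable to actual reachability, fill unreachable observations with harmless actions, and take ranks from the acyclicity of the reachable transition graph. Your longest-path ranks (with the explicit $|Q_H|-1$ bit-width check) and your use of $a_{\mathrm{stop}}$ at unreachable observations are cosmetic variants of the paper's topological-order positions and arbitrary available actions. The comparator-exactness point you flag is likewise taken for granted in the paper as a standard encoding.
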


\begin{proof}
    A satisfying assignment selects a unique available action at each observation.
    These actions define the rule.
    The $R_q$ clauses mark every configuration that the adversary may reach from a marked configuration.
    The ranks strictly decrease at every such move, so an execution cannot be infinite.
    Every maximal execution therefore stops.
    The $T_{s,q}$ clauses require every stop to be at its start.
    The $R^{\neg m}_q$ clauses forbid a stop before every vertex has been visited.

    For the converse, we start with a successful rule.
    We first redefine its outputs at unreachable observations to use arbitrary available non-stay actions.
    This does not change any execution.
    We assign each reachability variable according to whether the corresponding configuration is actually reachable, including the variants that record the start and an unvisited vertex.
    A directed cycle in the reachable configuration-transition graph would let the adversary repeat it forever, so that graph is acyclic.
    We order its configurations topologically so that every transition goes from a larger position to a smaller one.
    These positions give the required ranks, and all clauses of $B_\tau$ follow.
\end{proof}

\subsection{Additional clauses checked independently}

The exact encoding $B_\tau$ is large and difficult to refute directly.
We therefore add three families of clauses that every successful rule satisfies.
They shorten the propositional proof but do not change the represented exploration problem.

For distinct starts $s,t$ and every configuration $q$, the encoding contains
\[
    \neg T_{s,q}\lor\neg T_{t,q}.
\]
These clauses form $P_{\mathrm{merge}}$.
If a successful rule is assigned its actual start-specific reachable sets, both variables cannot be true.
Otherwise the two executions reach $q$, and Lemma~\ref{lem:execution-merging} gives a failing maximal execution, contradicting the assumption that the rule explores the graph.

Separate checkers verify the remaining two kinds of additional clauses.
The generator imports 279,197 such clauses over the rule variables.
They consist of 252,515 trace clauses and 26,682 domain clauses.
A trace clause has the form
\[
    \bigvee_i \neg x_{o_i,a_i}.
\]
The simultaneous assignments $x_{o_i,a_i}=1$ fix a partial rule with $\xi(o_i)=a_i$.
With these actions fixed, an execution fails if it stops away from its start, stops leaving some vertex unvisited, requests an absent color, or runs forever.
The supplied checker searches for a failing execution prefix or a reachable cycle using only the fixed actions on some $H\in\mathcal H$ and start $s$.
It stops a search branch when the next observation has no fixed action.
Every accepted witness therefore remains a failure for every complete rule containing the fixed actions.
Every successful rule must therefore satisfy the clause.

A domain certificate assigns a nonempty set $D(o)\subseteq\mathsf{Act}(o)$ to every observation and contributes
\[
    \bigvee_o\ \bigvee_{a\in\mathsf{Act}(o)\setminus D(o)} x_{o,a}.
\]
The checker solves a finite reachability game.
For a fixed graph $H$ and start $s$, a game state records the agent position, the complete vertex coloring, and the set of vertices already visited.
The controller may choose an action from $D(o)$ separately at every game state, even when two game states have the same observation $o$.
The checker accepts only if the all-zero game state at start $s$ is losing against the adversary.
This controller can make more choices than a deterministic rule restricted to $D$, because it may choose different actions at states with the same observation.
If even this controller loses when restricted to $D$, then every deterministic rule using only actions in $D$ also loses.
Every successful rule must therefore select an action outside $D$ at some observation and satisfy the clause.
We write $P_{\mathrm{trace}}$ and $P_{\mathrm{domain}}$ for these two clause families and set
\[
    F_\tau=B_\tau\land P_{\mathrm{merge}}\land P_{\mathrm{trace}}\land P_{\mathrm{domain}}.
\]
Lemma~\ref{lem:execution-merging} proves the soundness of $P_{\mathrm{merge}}$.
Separate routines verify $P_{\mathrm{trace}}$ and $P_{\mathrm{domain}}$ independently of the SAT solver and DRAT proof.
Neither check assumes a fixed initial representative $\tau$.
Together with Lemma~\ref{lem:sat-base-exactness}, they show that $F_\tau$ is satisfiable exactly when such a successful rule with initial representative $\tau$ exists.

\subsection{Reference encoding and checked proof objects}

The reference encoding $F_{\mathcal H}$ contains the base and strengthening clauses for all nine graphs but does not fix one particular $\tau$.
It instead imposes $r_1,r_2\in\{1,2\}$ and $r_3=1$.
Thus one common rule chooses one of the four representatives and shares every other action across the four cases.
Lemma~\ref{lem:sat-base-exactness} and the soundness of the additional clauses show that $F_{\mathcal H}$ is satisfiable exactly when such a successful rule exists.

For each $F_\tau$, we take its checked unsatisfiable core and remove the unit clauses fixing $\tau$.
We unite the four remaining clause sets, add the three constraints above, and delete duplicates.
The result is the \emph{combined unsatisfiable core}, a subformula of $F_{\mathcal H}$.
The same DRAT refutation has been checked against both this core and the containing reference encoding.

We obtain the \emph{reduced encoding} by removing unused variables, renumbering the remaining variables, and applying core extraction twice.
Its new DRAT proof is checked independently of the original proof.
The reference encoding directly represents the graph-exploration problem.
The combined core and reduced encoding give smaller checked routes to the same contradiction.
Table~\ref{tab:cactus-certificate-size} gives the dimensions of the checked objects.

\begin{table}[htbp]
    \caption{Checked objects for the three-color subcubic-pseudotree lower bound. File sizes are uncompressed bytes.}
    \label{tab:cactus-certificate-size}
    \centering
    \begin{tabular}{@{}lrrr@{}}
        \toprule
        Object                                       & Variables & Clauses   & Bytes         \\
        \midrule
        Reference encoding $F_{\mathcal H}$          & 984,882   & 4,756,124 & 118,444,022   \\
        Combined unsatisfiable core                  & 984,882   & 904,855   & 18,905,669    \\
        Reduced encoding                             & 282,535   & 433,150   & 9,817,630     \\
        Binary DRAT for reference/combined encodings & ---       & ---       & 1,749,789,591 \\
        Binary DRAT for reduced encoding             & ---       & ---       & 1,435,082,233 \\
        \bottomrule
    \end{tabular}
\end{table}

Of the 984,882 variables in $F_{\mathcal H}$, 832,032 are auxiliary variables introduced by the transition-wise binary rank comparators.
The rule, reachability, and rank variables account for the remainder.

\subsection{Reproducing the proof and required trusted software}

The reproducible proof chain runs from $\mathcal H$ in graph6 form through $F_{\mathcal H}$ to a checked DRAT refutation.
The repository preserves the vertex labels, and the verification script checks that its graph6 strings encode the nine graphs in Figure~\ref{fig:cactus-lower-witness}.
For the archived proof, the supplied generator reproduced the $\itup111$ CNF.
A second program replaced the degree-one and degree-two unit clauses with disjunctions that allow either value $1$ or $2$ and retained $r_3=1$.
The resulting formula was $F_{\mathcal H}$.
The verification script compared the generated formula byte for byte with the stored reference CNF used by the checked DRAT proof.
This comparison linked the checked graph and clause inputs to that exact reference CNF.

The repository contains the fixed inputs, generator, checkers, and a script for regenerating the four branch proofs.
It does not contain the large precomputed CNF and DRAT files or the dense-to-original variable map.
No separate precomputed-proof archive is part of this distribution.
The script \texttt{reproduction/regenerate\_and\_verify.sh} checks the graph, trace-clause, and domain-clause inputs, generates $F_\tau$ for each of the four representatives, and compares each CNF with its recorded SHA-256 hash.
It then uses CaDiCaL to produce a DRAT refutation for each branch and checks every refutation with \texttt{drat-trim}.
Together with the initial-tuple reduction, these four checks establish the same lower bound without the archived combined or reduced proof objects.
The repository's README gives the commands and dependencies.
Four parallel branches took about half a day in the reference environment, with about 7\,GB of free disk space and 10\,GB of memory.

As a positive control for the generator, the encoding for the two trees and three cycles in $\mathcal H$, including the merge clauses, is satisfiable.
It remains satisfiable when the rule of Algorithm~\ref{alg:tree-cycle-three} is fixed by unit clauses.
Replacing its action at the degree-one all-zero observation $(0;\{\!\{0\}\!\})$ by a stop makes it unsatisfiable.
Moreover, the base encoding of every single graph in $\mathcal H$ is satisfiable, so no graph in $\mathcal H$ alone witnesses the lower bound, which is a statement about one rule for all nine graphs.

In the reference environment, \texttt{drat-trim} reported \texttt{s VERIFIED} for the reference encoding, combined core, and reduced encoding.
For all three verifications, \texttt{drat-trim} reported zero RAT lemmas.
Every proof addition retained in the checked cores was verified by reverse unit propagation.
The three checks took approximately 164, 132, and 79 minutes, with about 3\,GiB of peak memory in the largest check.
The file \texttt{certificate/SHA256SUMS} records the hashes of these archived CNF, DRAT, and variable-map files.

The trusted base consists of the generator, the independent checkers, the C++ toolchain and runtime, the Node.js runtime, SHA-256, and \texttt{drat-trim}.
It excludes the SAT solver and core-extraction tools, since the generated refutations are checked independently.
The trusted programs comprise about 1{,}700 lines of C++ and 1{,}200 lines of Node.js.
Lemma~\ref{lem:sat-base-exactness} establishes the semantic interpretation of the base encoding.
The source checkers validate the graph and clause inputs used by the generator.
In the regeneration procedure, \texttt{drat-trim} checks the generated $F_\tau$ for all four representatives, and the initial-tuple reduction then gives the lower bound.

\begin{proof}[Proof of Lemma~\ref{lem:four-initial-obstruction}]
    Suppose that a successful rule has one of the four initial representatives.
    By Lemma~\ref{lem:sat-base-exactness}, this rule gives a satisfying assignment of the corresponding base formula $B_\tau$.
    The soundness of the merge clauses and of the independently checked trace and domain clauses ensures that the assignment corresponding to the successful rule also satisfies the additional clauses.
    Hence the reference encoding $F_{\mathcal H}$, which combines the four representatives, is satisfiable.
    However, the reference CNF regenerated from the checked graph and clause inputs is byte-identical to the stored reference CNF, and its DRAT refutation has been checked by \texttt{drat-trim}.
    Thus $F_{\mathcal H}$ is unsatisfiable, a contradiction.
\end{proof}
\end{document}